\documentclass[11pt,letterpaper]{article}
\usepackage[T1]{fontenc}
\usepackage{newtxtext,newtxmath}
\usepackage[letterpaper,left=0.85in,right=0.85in,top=0.8in,bottom=0.8in,
  headheight=14pt,headsep=0.18in,footskip=0.38in]{geometry}
\usepackage{microtype}
\usepackage{amsfonts}
\usepackage{amsmath}
\usepackage{mathtools}
\usepackage{graphicx}
\usepackage{xcolor}
\usepackage{booktabs}
\usepackage{bm}
\usepackage{comment}
\usepackage{authblk}
\usepackage{caption}
\usepackage[numbers,sort&compress]{natbib}
\usepackage{hyperref}
\usepackage{cleveref}

\usepackage{xurl}

\graphicspath{{figs/}}
\DeclareGraphicsExtensions{.pdf,.png,.jpg}

\hypersetup{
  pdftitle={Multifidelity Formulations for Triangular Transport},
  pdfauthor={O. Davis, D. Sharp, Y. M. Marzouk, and G. Geraci},
  colorlinks=true,
  linkcolor=blue!50!black,
  citecolor=blue!50!black,
  urlcolor=blue!50!black,
  hypertexnames=false
}

\newtheorem{theorem}{Theorem}[section]

\newtheorem{proposition}[theorem]{Proposition}

\newenvironment{proof}{\par\noindent\textit{Proof.}\ }{\hfill$\square$\par}
\crefname{hypothesis}{Hypothesis}{Hypotheses}
\crefname{fact}{Fact}{Facts}

\newenvironment{keywords}{\par\smallskip\noindent\textbf{Keywords:}\ }{\par}
\newenvironment{MSCcodes}{\par\smallskip\noindent\textbf{MSC Codes:}\ }{\par}

\newcommand{\DGS}[1]{}
\newcommand{\DGSi}[1]{}
\newcommand{\OND}[1]{}
\newcommand{\ONDi}[1]{}
\newcommand{\todo}[2][]{}

\newcommand{\SZero}{\bm{S}_0} %
\newcommand{\SCR}{\widehat{\bm{S}}^{\mathrm{H\textrm{-}CR}}} %
\newcommand{\SFR}{\widehat{\bm{S}}^{\mathrm{H\textrm{-}FR}}} %
\newcommand{\SNH}{\widehat{\bm{S}}^{\mathrm{NH}}} %
\newcommand{\SSF}{\widehat{\bm{S}}^{\mathrm{SF}}} %

\title{Multifidelity Formulations for Triangular Transport}
\author[1]{Owen Davis\thanks{\href{mailto:ondavis@sandia.gov}{ondavis@sandia.gov} (Corresponding author)}}
\author[1,2]{Daniel Sharp\thanks{\href{mailto:dannys4@mit.edu}{dannys4@mit.edu}}}
\author[2]{Youssef Marzouk\thanks{\href{mailto:ymarz@mit.edu}{ymarz@mit.edu}}}
\author[1]{Gianluca Geraci\thanks{\href{mailto:ggeraci@sandia.gov}{ggeraci@sandia.gov}}}
\affil[1]{%
Department of Uncertainty Quantification and Optimization,\protect\\%
Sandia National Laboratories, Albuquerque, NM%
\vspace{1em}
}
\affil[2]{%
Center for Computational Science and Engineering,\protect\\%
Massachusetts Institute of Technology, Cambridge, MA%
}
\date{}

\begin{document}

\maketitle
\begin{abstract}
We develop multifidelity methods for constructing triangular transport maps from samples, when high-fidelity data are scarce but lower-fidelity data are more abundant. Using this set of multifidelity data, we approximate a triangular transport map that bijectively maps between a tractable reference density and the \textit{high-fidelity} target distribution. We introduce two strategies to leverage low-fidelity data: a hierarchical approach that composes maps between adjacent fidelity levels, and a non-hierarchical method that incorporates low-fidelity information through monotonicity-preserving corrections to the map parameterization. Numerical experiments compare these strategies with single-fidelity transport and demonstrate how the proposed multifidelity approaches can improve map estimation from limited high-fidelity data. To illustrate the broader utility of the learned maps, we also deploy them in a downstream amortized simulation-based inference task. This example shows that multifidelity improvements in map estimation can translate to improved conditional sampling and uncertainty quantification when high-fidelity data are scarce. %
\end{abstract}

\begin{keywords}
Multifidelity, measure transport, triangular transport, simulation-based inference
\end{keywords}

\begin{MSCcodes}
65C20, 62G07, 62F15, 49Q22, 68T07 %
\end{MSCcodes}

\section{Introduction}\label{sec:introduction}

Many methods for %
Bayesian computation and uncertainty quantification (UQ) sample from a desired complex target distribution by coupling this target with a well-understood reference distribution, via a deterministic map. 
When this map %
is explicitly learned, it can be used for density evaluation, sample generation, or statistical inference.%

This work focuses on \emph{triangular transportation of measure}, realized by transport maps with a ``triangular'' structure that offers many computational and theoretical advantages~\cite{villani2008optimal,carlier2010knothe}.
Triangular maps have been used in Bayesian inference~\cite{Marzouk2016,el2012bayesian,rubio2023transport,baptista2024bayesian,parno2016multiscale,brennan2020greedy}, optimal experimental design~\cite{huan2024optimal,koval2024tractable,li2024expected}, density estimation~\cite{wang2022minimax,martinez2024decomposing,lopez2024density}, generative modeling~\cite{irons2022triangular,katzfuss2024scalable}, and data assimilation~\cite{spantini2018inference,ramgraber2023ensemblea,ramgraber2023ensembleb,grange2024distributed}. 
Broader introductions can be found in~\cite{Marzouk2016,baptista2024representation,ramgraber2025friendly}.

In this article, we focus on learning triangular transport maps from data, i.e., from a set of samples drawn from the target distribution, without any direct knowledge of the target density.
A practical obstacle in this setting is that accurate map construction can require many target samples. When datasets are expensive to acquire---as is often the case for expensive simulations and physical experiments---it may be challenging to accurately learn these transport maps. %
Here, we address the problem of scarce data by using \emph{multifidelity} information: an ensemble of related information sources with different costs and accuracies. %

Specifically, we assume access to a small set of samples from the high-fidelity target distribution and more plentiful samples from an ensemble of related lower-fidelity distributions. We use these multifidelity data to estimate a triangular transport map coupling the high-fidelity target to a tractable reference distribution, with the goal of improving upon single-fidelity estimation from the scarce high-fidelity samples alone. The resulting map provides a tractable representation of the high-fidelity target, enabling efficient sampling, density evaluation, and even simulation-based inference. We refer to the construction of triangular transport maps from multifidelity data as \textit{multifidelity triangular transport}, and we study the distinct function approximation challenges that arise in this setting.

Multifidelity methods are common in query-intensive tasks~\cite{peherstorfer2018survey}, including Monte Carlo estimation~\cite{Gorodetsky2020CV,Haji2016,Giles2008,peherstorfer2016optimal}, surrogate modeling~\cite{kennedy2000predicting,Perdikaris2015,Le2014,Gorodetsky2020CV,GorodetskyIJUQ,ZengCMAME,zeng2025boosting}, likelihood-based inference~\cite{peherstorfer2019transport,muchandimath2025synchronized}, and likelihood-free inference~\cite{krouglova2025multifidelity,prescott2020multifidelity,prescott2021multifidelity}.
Many multifidelity methods are bi-fidelity, involving one high-fidelity and one low-fidelity data source. Extensions to more than two sources often impose a fidelity hierarchy and apply bi-fidelity corrections between adjacent levels. This approach is sensitive to the choice %
of hierarchy: if there is an uninformative or corrupted source, errors can propagate and degrade the overall approximation. %

More general multifidelity surrogate methods instead model relationships %
among information sources with a prescribed directed acyclic graph, where \textit{hierarchical} and \textit{peer} %
structures are special cases~\cite{GorodetskyCM,GorodetskyIJUQ,ZengCMAME,zeng2025boosting}.
In a \textit{hierarchical} structure, each model is assumed to be less informative, with respect to the high-fidelity model, than the model directly above it on the hierarchy. This produces a chain in which lower-fidelity information influences the high-fidelity approximation indirectly through intermediate models. In a \textit{peer} structure, there is no assumed relationship among the low-fidelity models, e.g., all models could be equally informative with respect to the high-fidelity model. %
The chosen dependency structure strongly affects performance (unless mitigation strategies are introduced; see, e.g., \cite{zeng2025boosting}), and peer structures are generally more robust to uninformative or corrupted low-fidelity sources. %

Motivated by this perspective, we propose two families of multifidelity triangular transport approaches. The first is a hierarchical multifidelity transport construction that trains a sequence of triangular maps between adjacent fidelity levels and composes them to reach the high-fidelity target.
Within this family, we consider two formulations that differ in how the hierarchy is used to construct the intermediate transport problems, leading to fixed-reference and changing-reference variants with distinct optimization properties. %
The second family introduces a non-hierarchical construction that uses all low-fidelity sources simultaneously by introducing corrections into the triangular map parameterization. In this setting, no fidelity hierarchy is required and the parametric corrections allow us to automatically %
suppress misleading lower-fidelity information. %
This approach corresponds to a peer-structured dependency graph in which all low-fidelity sources are direct parents of the high-fidelity source. %

We demonstrate, evaluate, and compare both families of constructions for several distribution learning tasks and for a canonical simulation-based inference problem, adapted to the multifidelity setting.

The remainder of the paper is organized as follows: Section~\ref{sec:relation_to_other_work} discusses the technical background and places our contributions within the literature. Section~\ref{sec:triangular_transport} reviews notation and the sample-based transport learning objective used throughout the paper. Sections~\ref{sec:hierarchical_multifidelity} and~\ref{sec:nonhierarchical_multifidelity} introduce hierarchical and non-hierarchical multifidelity transport methods, respectively. %
Section~\ref{sec:conditions_of_effectiveness} discusses the advantages and disadvantages of our new approaches. Section~\ref{sec:numerics} numerically explores our multifidelity methods and compares them to single-fidelity approaches that rely only on high-fidelity data. Section~\ref{sec:conclusion} offers conclusions and directions for future work. 

\section{Related work}\label{sec:relation_to_other_work}

In this section, we review existing work that relates to our setting of learning from samples drawn from distributions of differing fidelity. We highlight, however, that few of these use the entire multifidelity dataset when learning a transport map itself, and fewer still allow for sampling from conditional distributions---a key feature of triangular transport.

The works \cite{krouglova2025multifidelity, prescott2020multifidelity, prescott2021multifidelity} develop multifidelity methods for likelihood-free\slash simulation-based inference. These methods %
aim to characterize posterior distributions when simulation information is available at multiple fidelities. They do not, however, develop triangular transport constructions. In particular, \cite{prescott2020multifidelity, prescott2021multifidelity} develop multifidelity approximate Bayesian computation methods, while \cite{krouglova2025multifidelity} uses transfer learning to incorporate low-fidelity data into a neural density estimator. These approaches largely address bi-fidelity settings, though \cite{krouglova2025multifidelity} describes an untested extension to additional fidelities.%

The works \cite{muchandimath2025synchronized, muchandimath2025accelerating, peherstorfer2019transport} use multifidelity information in likelihood-based Bayesian inference. In \cite{muchandimath2025synchronized}, the authors introduce a Markov chain Monte Carlo (MCMC) method, SYNCE, that couples chains across fidelity levels using common random numbers; \cite{muchandimath2025accelerating} then uses triangular transport maps to accelerate this sampling strategy. Similarly, \cite{peherstorfer2019transport} constructs a transport map from a low-fidelity unnormalized density and uses it to guide proposals in a high-fidelity Metropolis--Hastings algorithm. These works combine multifidelity information with sampling and transport maps, where the transport maps are used to accelerate or guide likelihood-based inference. They do not address the problem considered here: how to train a high-fidelity triangular transport map using samples from target distributions of varying fidelity.

Two recent works are more closely related. \cite{calle2025generative} develops a Bayesian triangular transport model for multiscale spatial fields, using scale-aware autoregressive Gaussian processes to represent fine-scale distributions conditioned on coarse-scale inputs. Their setting is primarily one of spatial downscaling and multiscale conditional generation from paired fields across resolutions. In contrast, the present work is not restricted to spatial fields, resolution hierarchies, or conditional generation from coarse-fine data. %
\cite{sharp2026sampling} develops a gradient-free Bayesian inference framework combining annealing, multifidelity models, importance-weighted quadrature, and transport-based posterior surrogates. %
In contrast to the Bayesian inference setting of~\cite{sharp2026sampling}, the present work assumes that samples at each fidelity are already available and studies how those samples can be used directly to estimate a high-fidelity triangular transport map. Thus, while~\cite{sharp2026sampling} uses transport within a multifidelity setting, the focus here is on the design of transport maps themselves for a fixed multifidelity problem. Further, this work also examines the case when no hierarchy exists.

\section{Triangular transport}\label{sec:triangular_transport}
Consider target density $\pi$ and reference density $\eta$ on $\mathbb{R}^d$.
For a monotone triangular transport map $\bm{S}:\mathbb{R}^d\to\mathbb{R}^d$, the $k$th output is a function $S_k:\mathbb{R}^k\to\mathbb{R}$ depending on the first $k$ coordinates that is monotone increasing in the last input:%

\begin{equation}\label{eqn:tri_transport_structure}
    [\bm{S}(x_1,\ldots, x_d)]_k = S_k(x_1,\ldots,x_k) %
    \qquad
    \partial_k S_k \coloneq \frac{\partial}{\partial x_k} S_k(x_1,\ldots, x_k) > 0,\quad
    k=1,\ldots,d.
\end{equation}

The Jacobian $\nabla\bm{S}$ is therefore lower triangular and positive definite, and we denote with $\mathcal{F}_{\triangle}$ the set of all monotone triangular maps $\bm{S}:\mathbb{R}^d\rightarrow \mathbb{R}^d$ satisfying \eqref{eqn:tri_transport_structure}. Monotone triangular structure ensures that these maps are invertible.

We seek maps that \textit{push forward} the target density $\pi$ to the reference density $\eta$, which we denote by $\bm{S}_{\sharp}\pi=\eta$; we will also use the notation $\bm{S}^{\sharp}\eta \coloneq (\bm{S}^{-1})_\sharp \eta = \pi$ to denote $\bm{S}$ \textit{pulling back} $\eta$ to $\pi$. 
Accordingly, given a random variable $X\sim\pi$, we have $\bm{S}(X)\sim\eta$; conversely, if $Z\sim\eta$, then $\bm{S}^{-1}(Z)\sim\pi$. %
When it is easy to sample from $\eta$, $\bm{S}$ allows us to sample from $\pi$ via the inverse map; thus, this setting is thus often referred to as \textit{inverse transport}. The corresponding density transformation is:

\begin{equation}\label{eqn:density_pullback}
    \bigl(\bm{S}^\sharp \eta\bigr) (x)    
    =
    \eta(\bm{S}(\bm{x}))
    \left|\det\nabla\bm{S}(\bm{x})\right|.
\end{equation}

Under standard absolute continuity assumptions, there exists a unique map $\bm{S}\in\mathcal{F}_{\triangle}$ coupling $\pi$ and $\eta$; this map is the \textit{Knothe--Rosenblatt rearrangement} or KR map~\cite{rosenblatt1952remarks,carlier2010knothe,bonnotte2013knothe,Marzouk2016}.
Inverting the KR map is computationally convenient: to evaluate $\bm{S}^{-1}$, we solve $d$ one-dimensional root-finding problems (cf.\ Appendix~\ref{app:triangular_transport_details}).

\subsection{Learning a transport map from samples}\label{sec:inverse_transport}

In our setting, the target distribution $\pi$ is known only through samples $\{\bm{x}^{(j)}\}_{j=1}^{N}\sim\pi$. The goal is to approximate the map $\bm{S}^*\in\mathcal{F}_{\triangle}$ satisfying $\bm{S}^*_{\sharp}\pi=\eta$. Following~\cite{Marzouk2016}, this map $\bm{S}^*$
satisfies
\begin{equation}\label{eqn:tri_map_optimization_KL}
    \bm{S}^*\in\arg\min_{\bm{S}\in\mathcal{F}_{\triangle}}
    \mathcal{D}_{\mathrm{KL}}\!\bigl(
        \pi\, \| \,\bm{S}^{\sharp}\eta
    \bigr),
\end{equation}
where $\mathcal{D}_{\mathrm{KL}}$ is the Kullback--Leibler (KL) divergence. This optimization problem is equivalent, up to constants independent of $\bm{S}$, to solving

\begin{equation}\label{eqn:tri_map_optimization_step2}
    \min_{\bm{S}\in\mathcal{F}_{\triangle}}
    \mathbb{E}_{\pi}
    \left[
        -\log \eta(\bm{S}(\bm{x}))
        -\log \det\nabla\bm{S}(\bm{x})
    \right].
\end{equation}

When $\eta$ is log-concave, the optimization problem~\eqref{eqn:tri_map_optimization_step2} is convex in the map $\bm{S}$ \cite{opt_convexity}. The corresponding sample-average approximation, which can also be understood as a maximum likelihood estimation problem \cite{wang2022minimax}, is:

\begin{equation}\label{eqn:tri_map_optimization_saa}
    \min_{\bm{S}\in\mathcal{F}_{\triangle}}
    \frac{1}{N}
    \sum_{j=1}^{N}
    \left[
        -\log \eta\!\left(\bm{S}(\bm{x}^{(j)})\right)
        -\log \det\nabla\bm{S}(\bm{x}^{(j)})
    \right].
\end{equation}

The reference density is typically a freely chosen tractable distribution; throughout this work, we use the standard Gaussian, a common log-concave choice for which the objective separates across triangular components:

\begin{equation}\label{eqn:tri_map_kth_component_optimization}
    \min_{
    S_k:\mathbb{R}^k\to\mathbb{R},\ \partial_k S>0
    }\ %
    \frac{1}{N}
    \sum_{j=1}^{N}
    \left[
        \frac{1}{2}S_k^2(\bm{x}^{(j)})
        -\log \partial_k S_k(\bm{x}^{(j)})
    \right],
    \  k=1,\ldots,d.%
\end{equation}
This separation is computationally advantageous because the map components can be trained in parallel. 

\subsection{Map component parameterization}
\label{sec:component_parameterization}

We enforce the monotonicity constraint of $\mathcal{F}_{\triangle}$ using
an integrated-rectifier parameterization~\cite{baptista2024representation}. We write $\bm{S}=\mathcal{R}(\bm f)$, where  $\bm{f}=(f_1,\ldots, f_d)$ is lower-triangular and $f_k:\mathbb{R}^k\rightarrow \mathbb{R}$. The operator $\mathcal{R}$ acts componentwise as%

\begin{equation}\label{eqn:integrated_rectifier}
    S_k(\bm{x}_{\leq k}) = \mathcal{R}_k(f_k)(\bm{x}_{\leq k})
    =
    \underbrace{
    f_k(\bm{x}_{\leq k-1},0)
    }_{\text{non-monotone}}
    +
    \underbrace{
    \int_0^{x_k}
    g\!\left(
        \partial_k f_k(\bm{x}_{\leq k-1},t)
    \right)\,dt
    }_{\text{monotone}},
    \qquad k=1,\ldots,d,
\end{equation}
where $g:\mathbb{R}\to(0,\infty)$ is a positive, bijective rectifier. Monotonicity in the final coordinate is thus enforced by construction:%

\begin{equation}\label{eqn:integrated_rectifier_derivative}
    \partial_k S_k(\bm{x}_{\leq k})
    =
    g\!\left(
        \partial_k f_k(\bm{x}_{\leq k})
    \right)
    >0.
\end{equation}

For log-concave $\eta$, the infinite-dimensional transport objective \eqref{eqn:tri_map_optimization_step2} and its sample-average approximation \eqref{eqn:tri_map_optimization_saa} are convex in $\bm{S}$. For the standard Gaussian reference used here, the latter separates into the componentwise optimization problems \eqref{eqn:tri_map_kth_component_optimization}, which are convex in $S_k$. Under the reparameterization $S_k = \mathcal{R}_k(f_k)$, however, these objectives are generally non-convex as a functional of $f_k$. Nevertheless, \cite{baptista2024representation} establishes conditions under which this infinite-dimensional optimization over $f_k$ has no spurious local minima. Their analysis motivates our choice $g:=\mathrm{SoftPlus}$, which satisfies the rectifier-specific conditions imposed therein.

For computation, we restrict each $f_k$ to a finite-dimensional space $V_k^p$ spanned by multivariate Hermite functions of total order at most $p$, and write $f_k = f_k(\cdot;\bm{\theta}_k)$ to make explicit the coefficients $\bm{\theta}_k$. 
Further details are provided in Appendix~\ref{app:map_param} and~\cite{baptista2024representation,ramgraber2025friendly}.

\section{Triangular transport from multifidelity information}
\label{sec:multifidelity_inverse_transport}

We now consider sample-based triangular transport in a multifidelity setting. Our aim is to learn a transport map coupling a high-fidelity target distribution to a tractable
reference distribution using samples from the high-fidelity target and from several related 
lower-fidelity distributions.   
Throughout this section, fidelity is indexed by $\ell=0,\ldots,M$, where $\ell=0$ denotes high-fidelity and $\ell>0$ %
denotes lower-fidelity. %
Component indices are written $k=1,\ldots,d$, and sample indices are denoted by superscripts. Thus, for target density $\pi_\ell$ at fidelity $\ell$, %
we have a dataset $\mathcal{X}_\ell = \{\bm{x}_\ell^{(j)}\}_{j=1}^{N_\ell}$, with $\bm{x}_\ell^{(j)}\sim\pi_\ell$.%

We summarize here the key assumptions of our framework: %
First, the high-fidelity density $\pi_0$ is more expensive to sample than the lower-fidelity densities $\pi_\ell$, $\ell=1,\ldots,M$, i.e., %
$N_0\ll N_\ell$. Second, the lower-fidelities ($\ell\geq 1)$ are unordered a priori. %
We wish to use the small %
high-fidelity dataset $\mathcal{X}_0$ in concert with the low-fidelity datasets $\{\mathcal{X}_\ell\}_{\ell=1}^{M}$ to approximate the high-fidelity map $\bm{S}_0$, defined as the monotone triangular map achieving:

\begin{equation}\label{eqn:hf_kr_map}
    (\SZero)_{\sharp}\pi_0=\eta.
\end{equation}

A central concern is how multifidelity information should be leveraged to improve upon a single-fidelity approximation of $\SZero$. Existing theoretical results provide useful motivation for our proposed multifidelity formulations through a common complexity-reduction principle.
First, approximation-theoretic results for inverse triangular transport connect error in the map to error in the associated pullback measure~\cite{baptista2025approximation}. Since \eqref{eqn:hf_kr_map} equivalently gives $\pi_0 = (\SZero)^{\sharp}\eta$, an approximate map $\widehat{\bm S}_0$ induces the target measure approximation $\widehat{\pi}_0 =(\widehat{\bm S}_0)^{\sharp}\eta$. Theorem~4.10 of~\cite{baptista2025approximation} bounds the KL divergence between pullback measures $\bm F^{\sharp}\eta$ and $\bm G^{\sharp}\eta$ by an appropriate function-space error between the maps $\bm F$ and $\bm G$, while Theorem~8.3 of~\cite{baptista2025approximation} shows that, over a chosen approximation class, the smallest achievable KL divergence between $\pi_0$ and a pullback $\bm S^{\sharp}\eta$ is controlled by how well that class can approximate $\SZero$. Thus, controlling the function approximation error in $\SZero$ provides a tractable means of controlling the resulting pullback approximation of $\pi_0$.

Finite-sample results for empirical KL minimization over triangular map classes give a complementary statistical perspective~\cite{irons2022triangular,wang2022minimax}. These results show that the error incurred in estimating a KR map from samples can be decomposed into two terms: a ``bias'' term that is essentially the approximation error described above and a ``variance'' term whose size depends on the size/complexity (specifically, the metric entropy in an appropriate norm) of the class of functions over which one estimates. While a more complex approximation class might reduce bias, it will increase the variance term for a fixed sample size. The implication for our work is that multifidelity formulations are expected to help insofar as they reduce the complexity of what must be learned from limited high-fidelity samples.

We propose two multifidelity families that pursue this complexity reduction in different ways. Section~\ref{sec:hierarchical_multifidelity} introduces hierarchical methods, which assume a fidelity ordering and compose maps between adjacent fidelity levels. Alternatively, Section~\ref{sec:nonhierarchical_multifidelity} introduces a non-hierarchical method using all lower-fidelity sources simultaneously through corrections to the high-fidelity map parameterization.

\subsection{Hierarchical multifidelity transport}\label{sec:hierarchical_multifidelity}

A common multifidelity strategy is to impose a hierarchy on the available information sources. Here, this means assuming that the target densities can be ordered as:
\begin{equation}\label{eqn:fidelity_hierarchy}
    \pi_M \rightarrow \pi_{M-1} \rightarrow \cdots \rightarrow \pi_0,
\end{equation}
where arrows indicate increasing fidelity. The premise is that %
we can estimate a complex high-fidelity KR map by composing low-complexity maps learned at successive fidelity levels. We consider two methods (see Figure~\ref{fig:hmf_changing_reference_and_fixed_reference}):%
\begin{itemize}
    \item The \emph{changing-reference} formulation finds a transport map between each adjacent pair of fidelities, where the lower-fidelity of the pair is the reference.
    \item The \emph{fixed-reference} formulation fixes $\eta$ at every stage
    but trains each new map on samples transformed by the composition of low-fidelity maps.%
\end{itemize}
Both formulations compose triangular maps to approximate $\SZero$. Their distinction is important. The first strategy mimics transports between adjacent fidelities; the second strategy preserves the reference's structure. Both approaches can be understood as %
extensions %
of deep transport methods~\cite{el2012bayesian,Marzouk2016,parno2018transport,peherstorfer2019transport}; see Appendix~\ref{app:hierarchical_transport_details}.%
\begin{figure}[!t]
    \centering
    \includegraphics[width=0.95\linewidth]{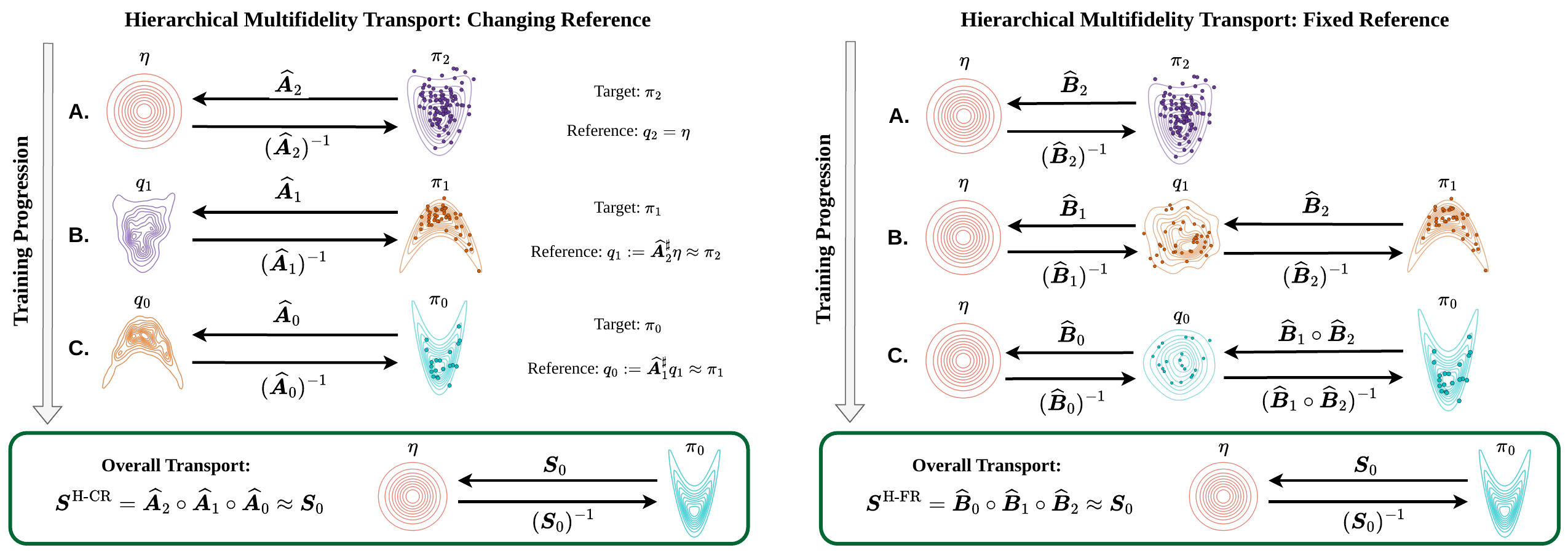}
    \caption{A generic tri-fidelity example of the changing reference hierarchical composition (left) and fixed reference hierarchical composition (right).%
    }
    \label{fig:hmf_changing_reference_and_fixed_reference}
\end{figure}

\subsubsection{Changing-reference hierarchical method}
\label{sec:hcr_formulation}

In the changing-reference formulation, the lowest-fidelity stage uses the prescribed reference density $\eta$; each subsequent stage uses the map-induced approximation of the previous lower-fidelity target density as its reference. The left panel of Figure~\ref{fig:hmf_changing_reference_and_fixed_reference} illustrates this construction.%

\paragraph{Formulation.}
The idealized compositional structure is

\begin{equation}\label{eqn:hcr_ideal_composition}
    \SZero
    =
    \bm{A}_M
    \circ
    \bm{A}_{M-1}
    \circ
    \cdots
    \circ
    \bm{A}_0 .
\end{equation}
Here, the map $\bm{A}_M$ pushes the lowest-fidelity density $\pi_M$ to $\eta$, while for $\ell=M-1,\ldots,0$, the map $\bm{A}_\ell$ pushes $\pi_\ell$ to an approximation of the density $\pi_{\ell+1}$ directly below it on the hierarchy.
In practice, we approximate each map $\mathbf{A}_\ell$ in sequence. Set%
\begin{equation}\label{eqn:hcr_initial_reference}
    q_M \coloneq \eta .
\end{equation}
Then, for $\ell=M,M-1,\ldots,0$, let $\widehat{\bm{A}}_\ell$ denote the learned triangular map satisfying $\widehat{\bm{A}}_\ell{}_{\sharp}\pi_\ell\approx q_\ell.$ %
In other words, we approximate the stage-$\ell$ target density,
\begin{equation}\label{eqn:hcr_stage_density}
    \widehat{\bm{A}}_\ell^{\sharp}q_\ell
    \approx
    \pi_\ell .
\end{equation}
For $\ell\geq 1$, the induced density becomes the reference for the next higher-fidelity:
\begin{equation}\label{eqn:hcr_reference_recursion}
    q_{\ell-1}
    \coloneq
    \widehat{\bm{A}}_{\ell-1}^{\sharp}q_\ell,
    \qquad
    \ell=M,\ldots,1 .
\end{equation}
Thus $q_{\ell-1}\approx\pi_\ell$ for $\ell=1,\ldots,M$, i.e., the reference at each stage is the learned approximation of the previous fidelity's density.
The final changing-reference map is:
\begin{equation}\label{eqn:hcr_final_map}
    \SCR
    \coloneq
    \widehat{\bm{A}}_M
    \circ
    \widehat{\bm{A}}_{M-1}
    \circ
    \cdots
    \circ
    \widehat{\bm{A}}_0
    \approx
    \SZero .
\end{equation}
If $\bm{x}\sim\pi_0$, then $\widehat{\bm{A}}_0(\bm{x})$ has density $q_0\approx\pi_1$ and $(\widehat{\bm{A}}_1\circ \widehat{\bm{A}}_0)(\bm{x})\sim q_1$ with $q_1\approx\pi_2$. Recursively, we get that $\SCR(\bm{x})$ is approximately distributed according to $\eta$.

\paragraph{Training.}
The maps are trained sequentially for $\ell=M,\ldots,0$. At stage $\ell$, reference $q_\ell$ is known via~\eqref{eqn:hcr_reference_recursion}. We then use the dataset $\mathcal X_\ell$ to find the map $\widehat{\bm{A}}_\ell$ minimizing:%
\begin{equation}\label{eqn:hcr_stage_objective}
    \min_{\bm{\theta}_\ell}
    \frac{1}{N_\ell}
    \sum_{j=1}^{N_\ell}
    \left[
        -\log q_\ell\!\left(
            \widehat{\bm{A}}_\ell(\bm{x}_\ell^{(j)};\bm{\theta}_\ell)
        \right)
        -
        \log\det\nabla \widehat{\bm{A}}_\ell
        (\bm{x}_\ell^{(j)};\bm{\theta}_\ell)
    \right]
    +
    \lambda^A_\ell
    \|\bm{\theta}_\ell\|_2^2,
\end{equation}
where $\lambda^A_{\ell}\geq 0$ is an $L^2$-regularization hyperparameter. For $\ell=M$, this is the usual KL objective with reference $q_M=\eta$. For $\ell<M$, the map-induced reference density $q_\ell$ can be evaluated recursively. The density of $q_{\ell-1}$ is defined by applying~\eqref{eqn:density_pullback} to the ansatz $q_{\ell-1} = \widehat{\bm{A}}_\ell^{\sharp}q_\ell$. %
This recursive density evaluation allows each learned approximation $\widehat{q}_\ell$ to become the reference for the next stage $\ell-1$.

Conceptually, the changing-reference construction transports between adjacent fidelity levels. Its main drawback is that, after the first stage, the reference densities $q_\ell$ are map-induced rather than prescribed; they are not generally log-concave or factorized. Thus, the later stage objectives need not retain the convexity or componentwise separability available for a Gaussian reference.

\subsubsection{Fixed-reference hierarchical composition}
\label{sec:hfr_formulation}

The fixed-reference formulation uses the same hierarchy \eqref{eqn:fidelity_hierarchy}, but trains every stage against the same prescribed reference density $\eta$. Rather than changing the reference, it changes the data: before training the stage-$\ell$ map, samples from $\pi_\ell$ are transformed by the composition of maps already learned from lower-fidelity levels. The right panel of Figure~\ref{fig:hmf_changing_reference_and_fixed_reference} illustrates this construction for a generic tri-fidelity hierarchy.

\paragraph{Formulation}
The idealized composition is given by:
\begin{equation}\label{eqn:hfr_ideal_composition}
    \SZero
    =
    \bm{B}_0
    \circ
    \bm{B}_1
    \circ
    \cdots
    \circ
    \bm{B}_M .
\end{equation}
The composition ordering differs from the changing-reference formulation because each higher-fidelity $\bm{B}_\ell$ is applied after the composition of previous low-fidelity maps.

At the lowest fidelity, $\ell=M$, no lower-fidelity maps have been learned, so the training data are simply those available for $\pi_M$:
\begin{equation}\label{eqn:hfr_lowest_transformation}
    \bm{y}_M^{(j)}
    :=
    \bm{x}_M^{(j)}
    \sim
    q_M
    \coloneq
    \pi_M,
    \qquad j=1,\ldots,N_M .
\end{equation}
We train $\widehat{\bm{B}}_M$ so that $(\widehat{\bm{B}}_M)_\sharp\pi_M\approx\eta$. At the next stage, $\ell=M-1$, samples from $\pi_{M-1}$ are first transformed by this learned lower-fidelity map:
\begin{equation}\label{eqn:hfr_M_minus_1_transformation}
    \bm{y}_{M-1}^{(j)}
    :=
    \widehat{\bm{B}}_M(\bm{x}_{M-1}^{(j)})
    \sim
    q_{M-1}
    \coloneq
    (\widehat{\bm{B}}_M)_\sharp\pi_{M-1}.
\end{equation}
If $\pi_{M-1}$ is close to $\pi_M$, then $(\widehat{\bm{B}}_M)_\sharp\pi_{M-1}\approx\eta$, leaving $\widehat{\bm{B}}_{M-1}$ to learn only a residual correction.

More generally, for $\ell=M-1,M-2,\ldots,0$, the transformed training samples are:%

\begin{equation}\label{eqn:hfr_transformation}
    \bm{y}_\ell^{(j)}
    :=
    \left(
    \widehat{\bm{B}}_{\ell+1}
    \circ
    \widehat{\bm{B}}_{\ell+2}
    \circ
    \cdots
    \circ
    \widehat{\bm{B}}_M
    \right)
    (\bm{x}_\ell^{(j)})
    \sim
    q_\ell
    \coloneq
    \left(
    \widehat{\bm{B}}_{\ell+1}
    \circ
    \widehat{\bm{B}}_{\ell+2}
    \circ
    \cdots
    \circ
    \widehat{\bm{B}}_M
    \right)_\sharp
    \pi_\ell .
\end{equation}
The stage map $\widehat{\bm{B}}_\ell$ is then trained so that
\begin{equation}\label{eqn:hfr_stage_push}
    (\widehat{\bm{B}}_\ell)_\sharp q_\ell
    \approx
    \eta .
\end{equation}
Thus, if the hierarchy is informative, the lower-fidelity composition approximately pushes $\pi_\ell$ to the reference, and the current stage learns only the remaining correction. The final fixed-reference hierarchical map is

\begin{equation}\label{eqn:hfr_final_map}
    \SFR
    \coloneq
    \widehat{\bm{B}}_0
    \circ
    \widehat{\bm{B}}_1
    \circ
    \cdots
    \circ
    \widehat{\bm{B}}_M
    \approx
    \SZero .
\end{equation}

\paragraph{Training}
Training proceeds sequentially for $\ell=M,M-1,\ldots,0$. At stage $\ell$, the transformed dataset $\{\bm{y}_\ell^{(j)}\}_{j=1}^{N_\ell}$ is defined by \eqref{eqn:hfr_lowest_transformation} for $\ell=M$ and by \eqref{eqn:hfr_transformation} otherwise. The map $\widehat{\bm{B}}_\ell$ is trained by solving
\begin{equation}\label{eqn:hfr_stage_objective}
    \min_{\bm{\theta}_\ell}
    \frac{1}{N_\ell}
    \sum_{j=1}^{N_\ell}
    \left[
        -\log \eta\!\left(
            \widehat{\bm{B}}_\ell(\bm{y}_\ell^{(j)};\bm{\theta}_\ell)
        \right)
        -
        \log\det\nabla \widehat{\bm{B}}_\ell
        (\bm{y}_\ell^{(j)};\bm{\theta}_\ell)
    \right]
    +
    \lambda_\ell^B
    \|\bm{\theta}_\ell\|_2^2 ,
\end{equation}
where $\lambda_\ell^B\geq 0$. For $\ell=M$, this is the ordinary KL objective; %
for $\ell<M$, it is the KL objective applied to samples transformed by the lower-fidelity composition.

The fixed-reference formulation has a computational advantage over the changing-reference formulation. If $\eta$ is standard Gaussian, every stage uses the same log-concave, factorized reference density and therefore retains the componentwise separability described in Section~\ref{sec:inverse_transport}. The key difference is that the intermediate maps are not transports between adjacent target densities; they are residual corrections to the lower-fidelity composition applied to the current target.

\subsection{Non-hierarchical multifidelity transport}
\label{sec:nonhierarchical_multifidelity}

The hierarchical constructions in Section~\ref{sec:hierarchical_multifidelity} require a prescribed fidelity ordering. In many multifidelity settings, such an ordering may be unavailable or unknown a priori; for example, different low-fidelity models may arise from distinct modeling assumptions, numerical discretizations, or reduced physics, so that no single hierarchy captures their relationship to the high-fidelity model. The non-hierarchical method developed here treats the low-fidelity densities as peer information sources. It uses all lower-fidelity datasets simultaneously and retains a common reference density $\eta$ for every fidelity level.

The central assumption is that the high-fidelity KR map $\SZero$ shares useful structure with at least some of the low-fidelity KR maps. Rather than composing maps along a hierarchy, the non-hierarchical method builds a single high-fidelity map approximation whose component parameterization contains corrected and scaled contributions from all low-fidelity maps. The trainable scale coefficients provide a mechanism for amplifying informative low-fidelity sources and suppressing uninformative ones, while the correction terms allow the low-fidelity maps to adapt during joint training.

\subsubsection{Formulation}
\label{sec:nonhierarchical_formulation}

For each low-fidelity density $\pi_\ell$, $\ell=1,\ldots,M$, let $\bm{S}_\ell$ denote the monotone triangular map satisfying
\begin{equation}\label{eqn:lf_kr_maps}
    (\bm{S}_\ell)_{\sharp}\pi_\ell=\eta,
    \qquad
    \ell=1,\ldots,M .
\end{equation}
We first approximate these low-fidelity maps independently. For $\ell=1,\ldots,M$ and $k=1,\ldots,d$, we write a map $\widehat{\bm{S}}_\ell$ componentwise for a set of parameters $\bm{\theta}_{\ell,k}$:
\begin{gather}\label{eqn:pre_trained_low_fidelity}
\widehat S_{\ell,k}(\bm{x}_{\leq k};\bm{\theta}_{\ell,k})
=
f_{\ell,k}(\bm{x}_{\leq k-1},0;\bm{\theta}_{\ell,k})
+
\int_{0}^{x_k}
g\!\left(
    \partial_k f_{\ell,k}(\bm{x}_{\leq k-1},t;\bm{\theta}_{\ell,k})
\right)\,\mathrm{d}t.
\end{gather}
Subsequently, we find the parameters $\bm{\theta}^*_{\ell,k}$ that minimize a regularized KL loss~\eqref{eqn:tri_map_kth_component_optimization}, and freeze them as \textit{pretrained} parameters, which encode a prior for the following correction step.

Because low-fidelity maps do not always help in approximating the high-fidelity transport, we introduce trainable corrections to each of the pretrained low-fidelity maps. We define the corrected low-fidelity map via:
\begin{gather}
    \widehat{S}^{\mathrm{corr}}_{\ell,k}(\bm{x}_{\leq k}; \bm{\theta}^{\mathrm{corr}}_{\ell,k}) = f_{\ell,k}(\bm{x}_{< k}, 0;\bm{\theta}^*_{\ell,k} + \bm{\theta}^{\mathrm{corr}}_{\ell,k}) + \int_0^{x_k}g\left(\partial_k f(x_{< k},t; \bm{\theta}^*_{\ell,k} + \bm{\theta}^{\mathrm{corr}}_{\ell,k})\right)\,\mathrm{d}t.
\end{gather}
As the correction is applied as a shift of the pretrained optimizer $\bm{\theta}^*$ for each fidelity's components, we maintain monotonicity of the function approximation. We remark, though, that $\bm{\theta}^*_{\ell,k}$ is entirely fixed after pretraining.
We now define the non-hierarchical multifidelity map $\SNH$, whose $k$-th component combines a high-fidelity shift function $\delta_k$ with scaled contributions from the corrected low-fidelity components:

\begin{equation}\label{eqn:nonhier_parameterization}
\begin{gathered}
    S^{\mathrm{NH}}_k(\bm{x}_{\leq k};\bm{\Theta}_k) = \delta_{k}(\bm{x}_{<k},0;\bm{\vartheta}_k) + \sum_{\ell=1}^M \rho_{\ell,k}^{\mathrm{nm}}f_{\ell,k}(\bm{x}_{<k},0;\bm{\theta}^*_{\ell,k} + \bm{\theta}^{\mathrm{corr}}_{\ell,k})\\[-0.5em]
    + \int_0^{x_k}g\Big(\partial_k \delta_k(\bm{x}_{< k},t;\bm{\vartheta}_k) + \sum_{\ell=1}^M \rho^{\mathrm{m}}_{\ell,k} \partial_k f_{\ell,k}(\bm{x}_{<k},t; \bm{\theta}^*_{\ell,k} + \bm{\theta}^{\mathrm{corr}}_{\ell,k})\Big)\,\mathrm{d}t,\quad k=1,\ldots,d.
\end{gathered}
\end{equation}
Here, we parameterize $\SNH$ via \(\bm{\Theta}_k
=
\left(
    \bm{\vartheta}_k,
    \{\rho^{\mathrm{nm}}_{\ell,k},\rho^{\mathrm{m}}_{\ell,k},
    \bm{\theta}^{\mathrm{corr}}_{\ell,k}\}_{\ell=1}^{M}
\right)\), where coefficient $\rho^{\mathrm{nm}}_{\ell,k}$ %
scales the nonmonotone contribution of fidelity $\ell$ to component $k$, while $\rho^{\mathrm{m}}_{\ell,k}$ scales the corresponding monotone contribution. The high-fidelity shift $\delta_k$ is a correction learned only from high-fidelity data. Placing the monotone linear combination inside the rectifier ensures a monotone $S^{\mathrm{NH}}_k$ without constraining the trainable parameters.

This parameterization is a monotonicity-preserving scale-shift construction. It is analogous to scale-shift models in the multifidelity surrogate literature \cite{kennedy2000predicting,GorodetskyCM,GorodetskyIJUQ}, but the scaling and shifting are applied separately to the nonmonotone and monotone parts of the triangular-map parameterization. The scale coefficients can downweight low-fidelity sources that are not useful for the high-fidelity map, while the correction functions allow the shared low-fidelity structure to adapt during joint training. In the present work, the scale coefficients are trainable scalars and the high-fidelity shift function lies in a subset of the space of functions for $f_{\ell,k}$. A straightforward extension would involve making the scale coefficients non-constant functions as well, and then balancing expressivity between the high-fidelity shift and scale functions.

\subsubsection{Training}
\label{sec:nonhierarchical_training}

Training proceeds in two phases: First, the low-fidelity maps are pretrained independently. Second, the corrected low-fidelity maps and the high-fidelity non-hierarchical map are trained jointly.

\paragraph{Phase 1: independent low-fidelity pretraining}
For each low-fidelity level $\ell=1,\ldots,M$, we pretrain $\widehat{\bm{S}}_\ell$ by minimizing the empirical KL loss
\begin{equation}\label{eqn:tri_map_optimization_pre_train_lf}
    \mathcal{J}^{\mathrm{pre}}_{\ell}(\bm{\theta}_{\ell})
    :=
    \frac{1}{N_\ell}
    \sum_{j=1}^{N_\ell}
    \left[
        -\log\eta\!\left(
            \widehat{\bm{S}}_\ell(\bm{x}^{(j)}_\ell;\bm{\theta}_{\ell})
        \right)
        -
        \log\det\nabla\widehat{\bm{S}}_\ell
        (\bm{x}^{(j)}_\ell;\bm{\theta}_{\ell})
    \right].
\end{equation}
These $M$ problems decouple across fidelity levels and can be solved independently. If $\eta=\mathcal{N}(0,I)$, each pretraining objective also separates across triangular components:

\begin{equation}\label{eqn:tri_map_kth_component_optimization_pre_train_lf}
    \mathcal{J}^{\mathrm{pre}}_{\ell,k}(\bm{\theta}_{\ell,k})
    :=
    \frac{1}{N_\ell}
    \sum_{j=1}^{N_\ell}
    \left[
        \frac{1}{2}
        \widehat S_{\ell,k}^2(\bm{x}^{(j)}_\ell;\bm{\theta}_{\ell,k})
        -
        \log \partial_k\widehat S_{\ell,k}(\bm{x}^{(j)}_\ell;\bm{\theta}_{\ell,k})
    \right],
    \qquad
    k=1,\ldots,d .
\end{equation}

\paragraph{Phase 2: all-at-once multifidelity training}
After low-fidelity pretraining, the pretrained functions $f_{\ell,k}$ are fixed. We then jointly train the high-fidelity shift functions $\delta_k$, the scale coefficients $\rho^{\mathrm{nm}}_{\ell,k}$ and $\rho^{\mathrm{m}}_{\ell,k}$, and the correction functions $f^{\mathrm{corr}}_{\ell,k}$ by minimizing

\begin{equation}\label{eqn:NH-MF_all_at_once}
\begin{gathered}
\mathcal{J}(\bm{\Theta})
:=
\frac{1}{N_0}
\sum_{j=1}^{N_0}
\left[
    -\log\eta\!\left(
        \SNH(\bm{x}^{(j)}_0;\bm{\Theta})
    \right)
    -
    \log\det\nabla \SNH
    (\bm{x}^{(j)}_0;\bm{\Theta})
\right]
\\[-0.5em]
+
\sum_{\ell=1}^{M}
\frac{w_\ell}{N_\ell}
\sum_{j=1}^{N_\ell}
\left[
    -\log\eta\!\left(
        \widehat{\bm{S}}^{\mathrm{corr}}_\ell
        (\bm{x}^{(j)}_\ell;\bm{\theta}^{\mathrm{corr}}_\ell)
    \right)
    -
    \log\det\nabla
    \widehat{\bm{S}}^{\mathrm{corr}}_\ell
    (\bm{x}^{(j)}_\ell;\bm{\theta}^{\mathrm{corr}}_\ell)
\right]
+
\lambda\|\bm{\Theta}\|_2^2,
\end{gathered}
\end{equation}\ignorespacesafterend
where $\lambda\geq0$ is a hyperparameter controlling the strength of the $L^2$ regularization.
For $\ell=1,\ldots,M$, we define each low-fidelity data-dependent weight as $w_\ell = N_\ell / \sum_{r=1}^M N_r$. %
The high-fidelity term is left with unit weight because the primary objective is to approximate a map pushing $\pi_0$ to $\eta$. Other weighting strategies are possible, and we leave the exploration of additional choiches to future work.%

The first term in~\eqref{eqn:NH-MF_all_at_once} trains $\SNH$ to push high-fidelity samples to the reference. The second term trains each corrected low-fidelity map to push its own low-fidelity samples to the same reference. Since $\SNH$ shares correction parameters with the corrected low-fidelity maps, the second term acts as a multifidelity regularizer: it biases the high-fidelity map toward structures supported by abundant low-fidelity data. At the same time, the scale coefficients $\rho^{\mathrm{nm}}_{\ell,k}$ and $\rho^{\mathrm{m}}_{\ell,k}$ allow the high-fidelity map to suppress low-fidelity contributions that are not useful for approximating $\SZero$.

Because every term in \eqref{eqn:NH-MF_all_at_once} uses the same reference density $\eta$, this non-hierarchical multifidelity (NH-MF) objective retains the fixed-reference structure discussed in Section~\ref{sec:conditions_of_effectiveness}. For a factorized reference, this structure also yields componentwise separability; a general statement covering both the non-hierarchical objective and the fixed-reference hierarchical stages is given in Appendix~\ref{app:fixed_reference_separability}.

\subsection{Discussion and comparison}
\label{sec:conditions_of_effectiveness}
The common objective of all three methods is to reduce the complexity of what must be learned from scarce high-fidelity samples, but the methods impose different structural assumptions. The changing-reference hierarchy assumes that the difficult transport problem between $\pi_0$ to $\eta$ can be decomposed into simpler transport problems along a fidelity ordering. On the other hand, the fixed-reference hierarchy assumes that maps learned at lower fidelities {also} approximately map higher-fidelity samples to the reference, so that later stages need only to learn low-complexity residual corrections. %
The non-hierarchical method assumes that the high-fidelity KR map is represented well by a scale-shift construction built from corrected low-fidelity maps.

The structural assumption in the hierarchical approaches is that each transport map in the compositions \eqref{eqn:hcr_ideal_composition} or \eqref{eqn:hfr_ideal_composition}, especially for smaller values of $\ell$,
is simpler than the direct map from $\pi_0$ to $\eta$. This is natural when fidelities arise from a nested model family, such as increasing mesh resolution or decreasing solver tolerance. Under this condition, each map may belong to a less complex function class than the full high-fidelity KR map, which is precisely the type of complexity reduction suggested by the approximation and learning theory discussed in the opening of Section~\ref{sec:multifidelity_inverse_transport}. 
The vulnerability is that successive maps would need to become simpler, which is not guaranteed to be true in general. For instance, in the changing-reference formulation a learned reference could strongly depart from the target making the next learning problem harder. %
Furthermore, as previously noted, after the lowest-fidelity stage, the reference densities are learned map-induced densities rather than the prescribed reference $\eta$, so the later stage objectives generally do not retain the convexity or componentwise separability available when the reference is standard Gaussian. %

On the other hand, the fixed-reference method is effective when any given lower-fidelity transport map resembles the true KR map $\SZero$, %
so each residual map $\widehat{\bm{B}}_\ell$ is nearly the identity. %
The fixed-reference formulation seeks complexity reduction through residual learning rather than through explicit adjacent transports. Its advantage is that every stage is trained against the same prescribed reference; when $\eta$ is Gaussian, this preserves the componentwise separability described in Appendix~\ref{app:fixed_reference_separability}. %
The main limitation of fixing the reference is that we still require a favorable fidelity ordering, and errors in the cumulative lower-fidelity map can still impact all higher-fidelity residual stages when working with finite samples. For instance, target samples from lower-fidelity models could be mapped far from the support of the standard Gaussian reference making the use of this formulation challenging.

The non-hierarchical method does not assume a fidelity ordering. Its structural condition is representational: the high-fidelity map $\SZero$ should be well approximated by the non-hierarchical parameterization, namely by a high-fidelity shift combined with scaled and corrected low-fidelity map components. This construction can reduce statistical complexity when some low-fidelity maps contain structure aligned with the high-fidelity transport, because the high-fidelity samples are then used primarily to learn scale coefficients, corrections, and residual high-fidelity structure rather than the full map from scratch. The trainable scales provide a mechanism to suppress uninformative sources, while the correction terms allow useful low-fidelity structure to adapt. This makes non-hierarchical most appropriate for heterogeneous model ensembles with no reliable ordering; such settings are also those in which peer-structured multifidelity surrogate methods have been observed to be more robust than purely hierarchical ones~\cite{GorodetskyCM,GorodetskyIJUQ}. The method may fail when the low-fidelity transports do not contain structure useful for approximating $\SZero$, or when it is more difficult to learn the low-fidelity correction than to learn the high-fidelity map directly.%

Finally, we discuss how extensible these methods are to other related approaches in transport literature. Though we use triangular maps in our fixed- and changing-reference hierarchical methods, these multifidelity methods actually impose no requirement on the form of the map. Instead, we simply require that the map $\widehat{\bm{S}}$ allow easy evaluation of the pullback density $\widehat{\bm{S}}^\sharp\eta$. In fact, we do not even require use of the KL objective. Both these options are chosen, however, for their computational tractability in the data-sparse regime that is common in high-fidelity modeling. By contrast, the non-hierarchical method is more directly tied to the triangular transport formulation. Its construction uses the componentwise integrated-rectifier parameterization to combine high-fidelity shifts with scaled and corrected low-fidelity map components while preserving monotonicity. This mechanism does not immediately translate to transport formulations without an analogous componentwise monotonicity-preserving structure. Nevertheless, the broader idea of using low-fidelity transport maps as peer information sources, rather than arranging them in a hierarchy, is not limited to triangular maps. Developing non-hierarchical multifidelity constructions for other classes of transport is a natural and compelling direction for future work.

\section{Numerical results}\label{sec:numerics}
We evaluate the proposed multifidelity transport constructions on three numerical examples. We first examine two map estimation tasks in which the high-fidelity KR map $\SZero$ is known explicitly: a two-dimensional polynomial example in Section~\ref{sec:banana_example} and a five-dimensional softened Neal's funnel example in Section~\ref{sec:nealfunnel_example}. The third example in Section~\ref{sec:LV} examines %
simulation-based inference in a stochastic Lotka--Volterra model, where the exact KR map is unavailable. Across these examples, we explore whether lower-fidelity data can reduce reliance on high-fidelity training samples in constructing useful transport maps.
All experiments compare single-fidelity triangular transport, trained only from high-fidelity samples, with the %
multifidelity methods developed in Sections~\ref{sec:hierarchical_multifidelity} and~\ref{sec:nonhierarchical_multifidelity}.

For the first two examples, sample allocations and hierarchical orderings are organized through the cost models in Table~\ref{tab:numerical_cost_models}. For each high-fidelity sample size $N_0$, the low-fidelity sample sizes are denoted by $N_1$ and $N_2$, and the hierarchy column gives the ordering used by the hierarchical methods, with arrows indicating increasing fidelity. Cost models A and B exchange the sample allocation between $\pi_1$ and $\pi_2$, while cost models C and D reverse the hierarchy with equal low-fidelity sample sizes. The Lotka--Volterra example uses a reduced cost-model setup described in Section~\ref{sec:LV}.

\begin{table}[!htb]
\centering
\caption{Cost models used across numerical experiments. For each high-fidelity sample size $N_0$, the table lists the low-fidelity sample sizes and the hierarchy used by the hierarchical multifidelity methods. }
\label{tab:numerical_cost_models}

\begin{tabular}{@{}c c c c@{}}
\toprule
Cost Model (CM) & Fidelity Hierarchy & $N_1/N_0$ & $N_2/N_0$ \\ \midrule
A & $\pi_2 \to \pi_1 \to \pi_0$ & $10$ & $100$ \\
B & $\pi_1 \to \pi_2 \to \pi_0$ & $100$ & $10$ \\
C & $\pi_2 \to \pi_1 \to \pi_0$ & $100$ & $100$ \\
D & $\pi_1 \to \pi_2 \to \pi_0$ & $100$ & $100$ \\
\bottomrule
\end{tabular}
\end{table}

All maps use the integrated-rectifier parameterization and Hermite-function expansions described in Appendix~\ref{app:map_param}. We use \emph{total order} (TO) to denote the maximum total polynomial order of these expansions. Thus, a TO-$p$ map uses total order $p$ for each relevant map component or stage. We also use \emph{high-fidelity total order} (HF-TO) to isolate the complexity of the part of the construction trained from high-fidelity samples: the final high-fidelity stage for hierarchical maps, and the high-fidelity shift and correction functions in \eqref{eqn:nonhier_parameterization} for the non-hierarchical map.

Samples are standardized componentwise at each fidelity level using the empirical mean and standard deviation of the training data. All training objectives include an $L^2$ regularization penalty, with parameters selected by cross validation as described in Appendix~\ref{app:regularization_cv}. For hierarchical maps, regularization is applied only to the final stage trained from high-fidelity data. For non-hierarchical maps, regularization is applied to the high-fidelity shift function, while scale coefficients, pretrained low-fidelity maps, and low-fidelity correction functions are left unregularized.

The accuracy diagnostics depend on the example. For the first two examples, where $\SZero$ is known, we report the relative mean-squared error in the map,

\begin{equation}\label{eqn:target_measure_weighted_error}
    \mathcal{E}_{\mathrm{rel}}(\widehat{\bm{S}}) :=
    \mathbb{E}_{\pi_0}\left[\left\|\widehat{\bm{S}}(X)-\SZero(X)\right\|_2^2 \right]\,\bigg/\,\mathbb{E}_{\pi_0}\left[\left\|\SZero(X)\right\|_2^2\right].
\end{equation}

\subsection{Simple analytical verification example}
\label{sec:banana_example}
We begin with a $d=2$ dimensional example for which the high-fidelity KR map is known explicitly. This allows us to evaluate map error directly and compare single-fidelity, hierarchical multifidelity, and non-hierarchical multifidelity constructions under controlled conditions. The example is designed to explore three concerns: performance under a common na\"ive parameterization, whether multifidelity structure can reduce the complexity of the part of the map learned from scarce high-fidelity data, and how sensitive the methods are to low-fidelity sample allocation or the chosen fidelity hierarchy.

Consider three banana-shaped target densities, with high-fidelity density $\pi_0$ and low-fidelity densities $\pi_1,\pi_2$. For $\ell\in{0,1,2}$, define
\begin{equation}\label{eqn:banana_density_family}
\pi_{\ell}(x_1,x_2)
\propto
\mathcal{N}_{1}\!\left(\frac{x_1-d_\ell}{a_\ell}\right)
\mathcal{N}_{1}\!\left(
\frac{x_2-e_\ell-h_\ell((x_1-d_\ell)/a_\ell)}{b_\ell}
\right),\quad h_\ell(z) = \sum_{k=2}^4 c_{\ell,k}z^k,
\end{equation}
where $\mathcal{N}_1$ is the one-dimensional standard Gaussian density. %
\begin{figure}[!tb]
\centering
\includegraphics[width=0.85\linewidth]{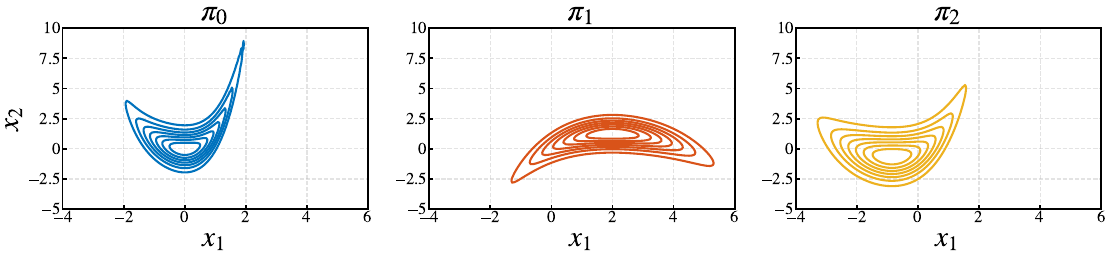}
\caption{Section~\ref{sec:banana_example} target densities: high-fidelity $\pi_0$ and low-fidelity $\pi_1,\pi_2$.}
\label{fig:tri-fid_target_triptych}
\end{figure}
Figure~\ref{fig:tri-fid_target_triptych} shows the three target densities.
The density $\pi_2$ is more aligned with $\pi_0$ than $\pi_1$, making this example useful for testing sensitivity to both low-fidelity sample allocation and hierarchical ordering. The exact high-fidelity KR map lies in a TO-4 triangular map class. Reproducibility details for this example, including the specific fidelity parameters and expression for the high-fidelity KR map, are collected in Appendix~\ref{app:banana_example}.

We compare four approximations: the single-fidelity map $\SSF$ trained using only high-fidelity samples; the non-hierarchical multifidelity map $\SNH$ from Section~\ref{sec:nonhierarchical_multifidelity}; the fixed-reference hierarchical map $\SFR$ from Section~\ref{sec:hfr_formulation}; and the changing-reference hierarchical map $\SCR$ from Section~\ref{sec:hcr_formulation}. Accuracy is measured using a sample approximation %
of \eqref{eqn:target_measure_weighted_error} with $10^4$ test samples and with $\SZero$ given by \eqref{eqn:banana4_true_kr}. For each method and each $N_0\in\{20, 100, 180, 260\}$, we repeat training over 20 independent random training data realizations.
\begin{figure}[!htb]
    \centering
    \includegraphics[width=0.9\linewidth]{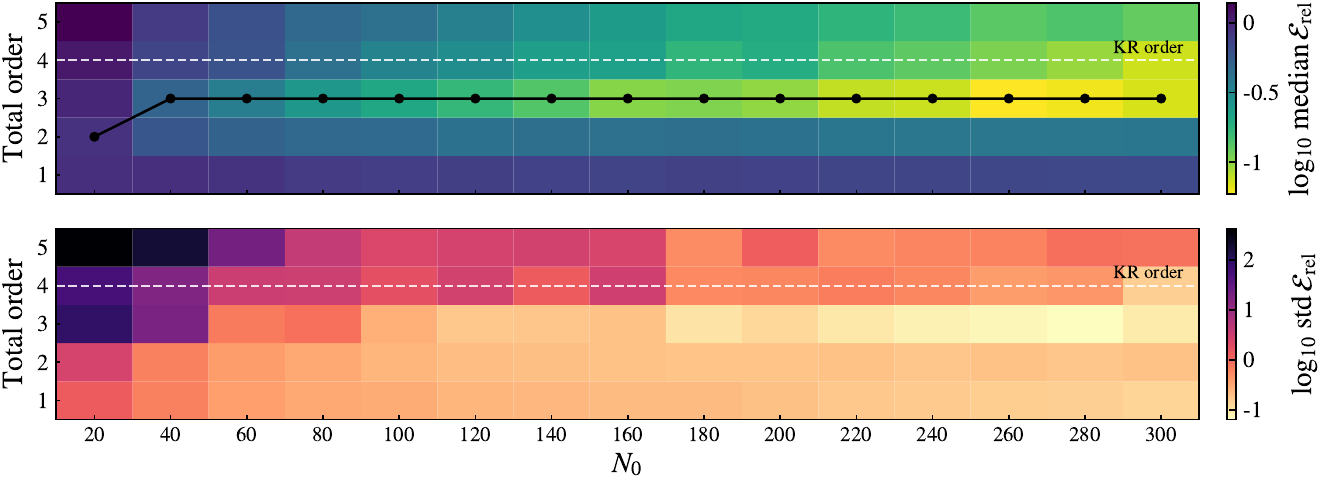}
    \caption{Section~\ref{sec:banana_example} single-fidelity transport performance as a function of total order and $N_0$. Top: median relative map error over 100 training replications on 100 random realizations of training data for each $N_0$; the black line shows the total order with the best median error for each $N_0$. Bottom: the standard deviation in the relative error over the same ensemble.}
    \label{fig:banana_heatmap}
\end{figure}

Figure~\ref{fig:banana_heatmap} shows the performance of $\SSF$ across a wide range of sparse $N_0$ values and total orders from one to five.
These results illustrate the opportunity for multifidelity improvement in sparse data regimes. Although the true KR map lies in a TO-4 approximation class, a TO-3 parameterization yields both lower median error and reduced variability when $N_0$ is small. This suggests that the order 4 terms in the true KR map are not reliably identifiable from sparse high-fidelity data alone. Consequently, if these higher-order features can instead be approximated through low-complexity transformations of well-resolved lower-fidelity features, then a multifidelity construction can provide an advantage in this regime.
 
\begin{figure}[!htb]
    \centering
    \includegraphics[width=\linewidth]{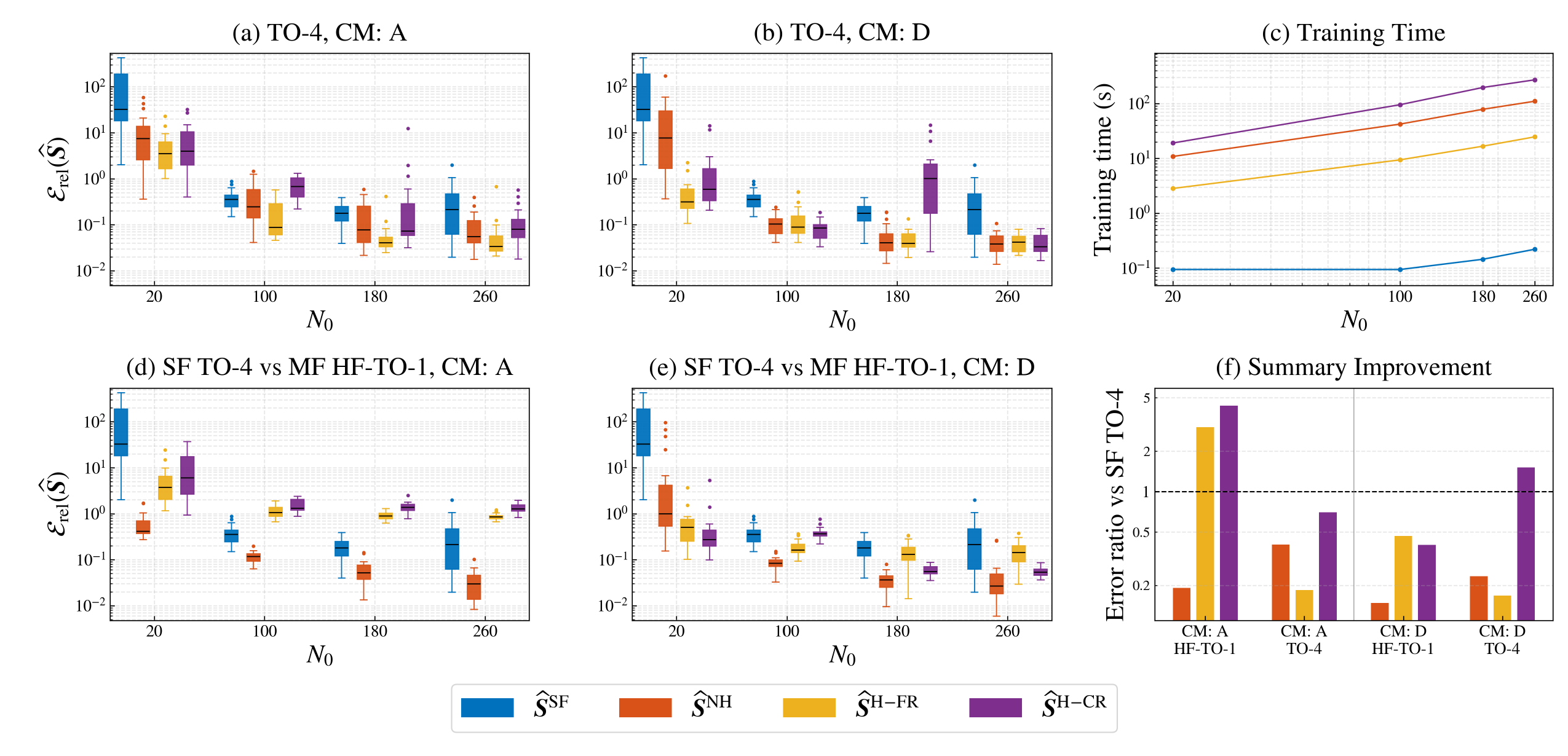}
    \caption{Quantitative comparisons of transport maps in Section~\ref{sec:banana_example}. Panels (a,b) compare TO-4 maps for cost models A and D. Panel (c) shows median training time for TO-4 maps time versus $N_0$, averaged over all cost models. Panels (d,e) compare the single-fidelity TO-4 baseline with multifidelity maps using high-fidelity TO-1 under cost models A and D. Panel (f) summarizes median error ratios relative to the single-fidelity TO-4 baseline, averaged over all $N_0$; values below one indicate improvement over the baseline.}
    \label{fig:quartic_banana_overall_results}
\end{figure}

Figure~\ref{fig:quartic_banana_overall_results} provides a comparison between the methods with respect to map parameterization, cost model, high-fidelity training data allotment $N_0$, and training time. Panels (a,b) use a common TO-4 parameterization. This is the lowest total order class that contains the high-fidelity KR map \eqref{eqn:banana4_true_kr}, so the single-fidelity parameterization is well specified. For $\SNH$, however, this parameterization is deliberately permissive: the high-fidelity shift can represent the exact map without relying on the pretrained low-fidelity maps. Thus, $\SNH$ is not forced to use low-fidelity structure---its use is \emph{implicit}. The optimizer is free to ignore the pretrained low-fidelity maps and represent the high-fidelity map primarily through the shift function.

The hierarchical methods use low-fidelity information differently. Their multifidelity structure is imposed \emph{explicitly} as a sequence of transport problems: the fixed-reference formulation transforms samples by the lower-fidelity map composition before training a final stage, while the changing-reference formulation changes the reference density along the hierarchy. Thus, even with a permissive HF-TO-4 parameterization, the hierarchical models explicitly modify the high-fidelity learning problem using lower-fidelity maps. Panels (a,b) should therefore be interpreted as a baseline comparison of the methods under a shared generic parameterization that is well-specified for a single fidelity, not as the parameterization most tailored to effectively leveraging multifidelity information and reducing sample complexity as outlined in Section~\ref{sec:conditions_of_effectiveness}.

Under this generic parameterization, the multifidelity methods generally improve upon single-fidelity transport, except in the sparsest high-fidelity data regimes where all methods produce relatively inaccurate maps. As $N_0$ increases, the performance gaps shrink, consistent with the high-fidelity data becoming sufficient to estimate the true KR map directly. Across the displayed regimes, $\SFR$ gives the most consistent improvement, followed by $\SNH$. The changing-reference hierarchical method is also often beneficial, but exhibits greater variability at some sample sizes. This behavior is consistent with the discussion in Section~\ref{sec:hcr_formulation}: changing-reference training uses map-induced intermediate references, resulting in more challenging, higher-dimensional, and non-convex map optimization. %
The multifidelity methods often perform better under cost model D, which provides more low-fidelity samples overall than cost model A, and for the hierarchical methods, uses the better hierarchy---placing the more informative low-fidelity density $\pi_2$ directly below $\pi_0$.

The timing results in panel (c) are included primarily to compare the two hierarchical approaches. In multifidelity settings, the dominant cost is typically assumed to be data acquisition rather than map training, so the faster training time of $\SSF$ is not the main comparison of interest. The key observation, however, is that $\SCR$ is substantially more expensive than $\SFR$, consistent with the recursive map-induced reference evaluation in Section~\ref{sec:hcr_formulation} and the loss of componentwise separability after the first stage. Since the changing-reference hierarchy does not provide a clear accuracy advantage commensurate with this additional cost, the remaining numerical examples use the fixed-reference formulation %
as hierarchical multifidelity method.

Panels (d,e) test the intended sample-complexity mechanism more directly. The single-fidelity baseline retains a TO-4 parameterization, while the multifidelity methods use HF-TO-1. Thus, the parts of the map learned from scarce high-fidelity samples are restricted, while the low-fidelity maps in NH and the pre-final-stage maps in the H-CR and H-FR methods retain TO-4. Under this comparison, $\SNH$ achieves the clearest improvement over the TO-4 single-fidelity baseline across the displayed sample sizes. This indicates that, in this example, the pre-trained low-fidelity components and trainable scales can represent much of the structure that would otherwise need to be learned from high-fidelity samples alone. The hierarchical methods again depend strongly on the ordering. Under cost model D, where $\pi_2$ is placed directly below $\pi_0$, the hierarchical maps are competitive in several regimes. Under cost model A, the same methods are much less effective.

Panel (f) aggregates these trends by reporting median error ratios relative to the
single-fidelity TO-4 baseline averaged over all $N_0$. The best aggregate performance is obtained by NH with HF-TO-1 under cost model D. H-FR with the favorable hierarchy and TO-4 parameterization is similarly effective, but the hierarchical methods are more sensitive to the chosen ordering, especially under the HF-TO-1 parameterization. 

\begin{figure}[!htb]
\centering
\includegraphics[clip, trim={0 0.1cm 0 0}, width=0.95\linewidth]{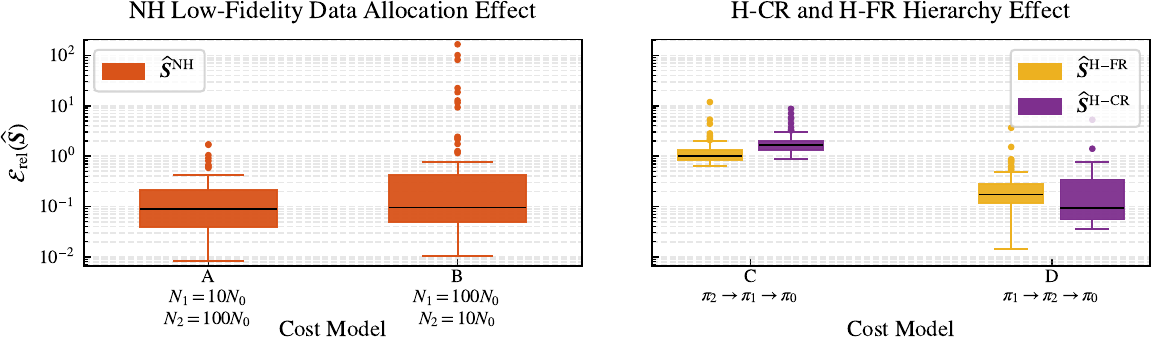}
\caption{Sensitivity to low-fidelity sample allocation and hierarchical ordering for HF-TO-1 multifidelity maps in Section~\ref{sec:banana_example} aggregated over $N_0\in\{20,100,180,260\}$. Left: $\SNH$ under cost models A and B, compared under different budgets for $\pi_1$ and $\pi_2$. %
Right: $\SFR$ and $\SCR$ under cost models C and D, which use equal low-fidelity sample sizes but reverse the hierarchy order.
}
\label{fig:banana_hierarchy_and_allocation_sensitivity}
\end{figure}

Figure~\ref{fig:banana_hierarchy_and_allocation_sensitivity} isolates the low-fidelity allocation and hierarchy effects described in Table~\ref{tab:numerical_cost_models}. The left panel compares $\SNH$ under cost models A and B. Since NH is insensitive to the fidelity hierarchy, this comparison isolates the effect of exchanging the low-fidelity sample allocation between $\pi_1$ and $\pi_2$. The difference is modest, but allocating more low-fidelity samples to the more informative source $\pi_2$ improves the aggregate error. This is consistent with the weighted low-fidelity terms in the NH objective \eqref{eqn:NH-MF_all_at_once}: the low-fidelity source with more samples receives greater weight in the joint training objective. The effect is modest, indicating that NH is not highly sensitive to the precise low-fidelity sample allocation in this example. 

The right panel compares the HF-TO-1 hierarchical methods under cost models C and D. These cost models use the same low-fidelity sample sizes but reverse the hierarchy order, thereby isolating the effect of hierarchy choice. The hierarchy used in cost model D performs substantially better than the hierarchy used in cost model C. This is consistent with the geometry in Figure~\ref{fig:tri-fid_target_triptych}: placing the more informative low-fidelity density $\pi_2$ directly below $\pi_0$ reduces the complexity of the final high-fidelity stage. This behavior reflects the conditions discussed in Section~\ref{sec:conditions_of_effectiveness}, where hierarchical methods are effective when adjacent transports, or the fixed-reference's residuals that are pushed forward, are simple enough to estimate from the available samples. A poor ordering can produce substantially larger errors.

Overall, this example supports three conclusions. First, multifidelity information
can improve triangular map estimation when high-fidelity data are limited. Second, the way multifidelity information enters the construction matters: hierarchical methods use low-fidelity maps explicitly through staged transformations, and under a suitable hierarchy, this empirically provides resiliency to a na\"\i ve final-stage parameterization. By contrast, NH uses low-fidelity information through its parameterization and benefits from a parameterization that biases it towards leveraging low-fidelity structure; e.g., by restricting the expressivity of the shift function. Third, NH is insensitive to the fidelity hierarchy, whereas H-FR and H-CR can be highly effective under a favorable ordering but degrade under a poor one. Among the hierarchical variants, H-CR is substantially more expensive to train and does not provide a commensurate accuracy advantage in this example.

\subsection{Softened Neal's funnels}
\label{sec:nealfunnel_example}

We next consider a $d=5$ dimensional tri-fidelity example \cite{neal2003slice} with an explicit high-fidelity KR map. This example is designed as a higher-dimensional, non-polynomial stress test of the conclusions from Section~\ref{sec:banana_example}. Unlike the previous example, the high-fidelity map is not contained in a finite polynomial total order class. The low-fidelity ensemble again contains one structurally aligned source and one less aligned source.

Let $d=5$ and $z=(z_1,\ldots,z_d)\sim\mathcal N(0,I_d)$. For $\ell\in{0,1,2}$, define $x\sim\pi_\ell$ by
\begin{equation}\label{eqn:nealfunnel_forward}
x_1=\mu_\ell+\sigma_\ell z_1,\qquad
x_k=\tau_{\ell,k}(z_1)\,z_k+\xi_{\ell,k}(z_1),\quad k=2,\ldots,d.
\end{equation}
Here $\tau_{\ell,k}$ and $\xi_{\ell,k}$ are fidelity- and coordinate-dependent conditional scale and shift functions. Their precise definitions, the coordinate-specific modifiers, and the fidelity parameters are given in \eqref{eqn:nealfunnel_scale_shift}--\eqref{eqn:nealfunnel_params}. These choices produce softened funnel-shaped targets whose conditional scales vary nonlinearly with $x_1$. The high-fidelity density is denoted by $\pi_0$, while $\pi_1$ and $\pi_2$ are lower-fidelity densities chosen so that $\pi_2$ is structurally better aligned with $\pi_0$ than with $\pi_1$.

\begin{figure}[!htb]
\centering
\includegraphics[width=\linewidth]{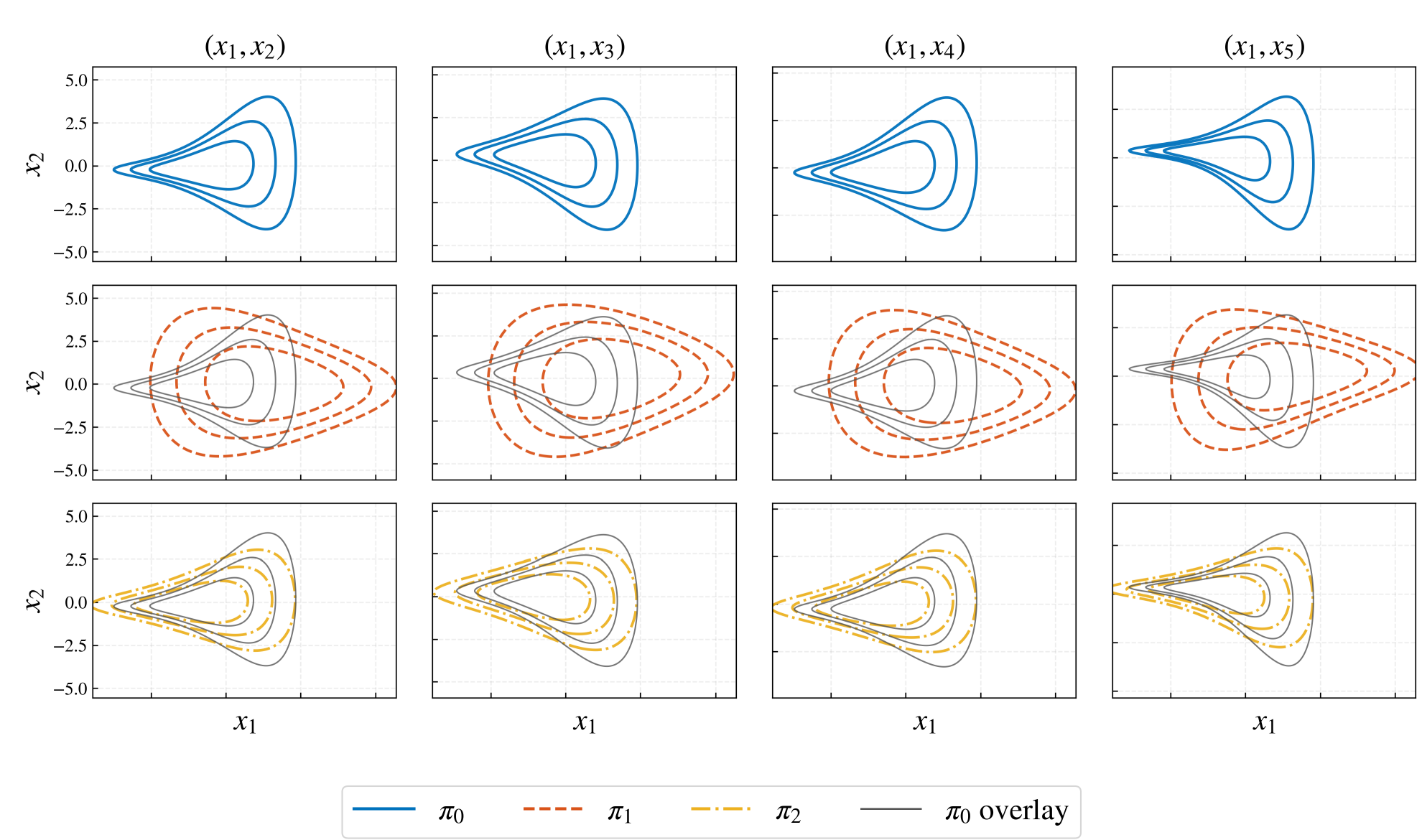}
\caption{Contour plots of $(x_1,x_k)$ for Section~\ref{sec:nealfunnel_example}. %
Colors denote high-fidelity $\pi_0$ and low-fidelity $\pi_1,\ \pi_2$. In the low-fidelity rows, contours of $\pi_0$ are overlaid for reference. The figure highlights that $\pi_2$ is structurally better aligned with $\pi_0$, while $\pi_1$ exhibits a more pronounced mismatch of geometry.}
\label{fig:nealfunnel_vis}
\end{figure}
Figure~\ref{fig:nealfunnel_vis} %
shows the three target densities through their $(x_1,x_k)$ marginals. The density $\pi_2$ preserves the orientation and broad conditional-scale structure of $\pi_0$, while $\pi_1$ is intentionally misaligned by reversing the funnel orientation. Thus the geometry again suggests that cost models B and D, which place $\pi_2$ directly below $\pi_0$, should be more favorable for the hierarchical method than cost models A and C.

For $\pi_0$, the exact KR map is explicit and is reported in \eqref{eqn:nealfunnel_true_kr}. We compare $\SSF$, $\SNH$, and $\SFR$, using a sample approximation of \eqref{eqn:target_measure_weighted_error} with $10^4$ samples (we have excluded the hierarchical changing-reference method following its poor numerical properties previously discussed). %
For each method and each $N_0\in\{50, 150, 250, 350, 800\}$, we repeat training over 20 random training data realizations.
\begin{figure}[!htb]
    \centering
    \includegraphics[width=0.95\linewidth]{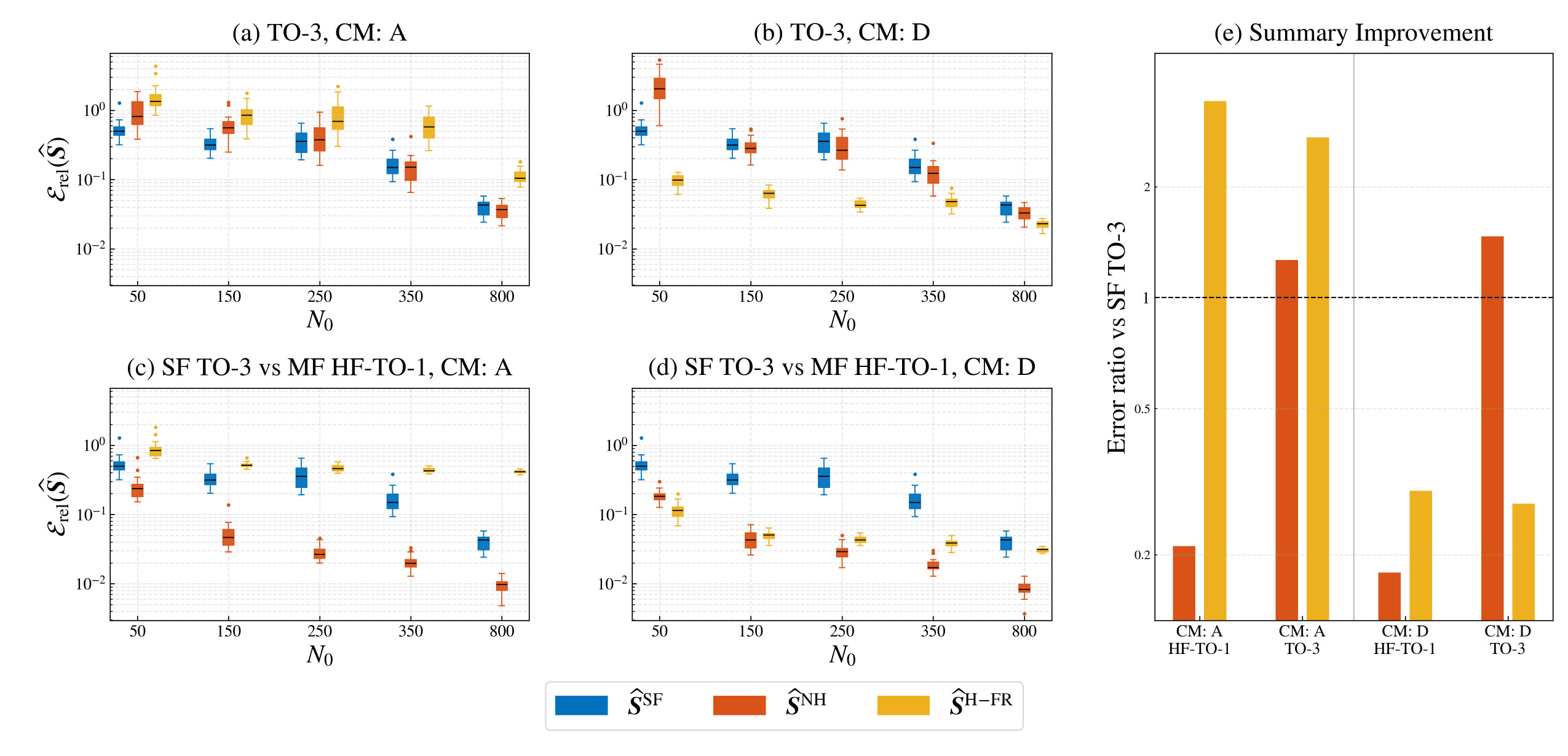}
    \caption{Quantitative comparisons of transport maps in Section~\ref{sec:nealfunnel_example}. Panels (a,b) compare TO-3 maps under cost models A and D. Panels (c,d) compare the single-fidelity TO-3 baseline with multifidelity maps using HF-TO-1, again under cost models A and D. Panel (e) summarizes median error ratios relative to the single-fidelity TO-4 baseline, averaged over all $N_0$; values below one indicate improvement over the baseline.}
    \label{fig:neal_funnel_overall_results}
\end{figure}

Figure~\ref{fig:neal_funnel_overall_results} mirrors the comparison in Figure~\ref{fig:quartic_banana_overall_results}, but in a higher-dimensional setting where the true KR map is non-polynomial. Panels (a,b) use a common TO-3 parameterization. Unlike the TO-4 comparison in Section~\ref{sec:banana_example}, this parameterization is not an exactly specified map class; rather, it was chosen as the minimum TO class that empirically enabled an accurate approximation of the true high-fidelity KR map in a single-fidelity setting with plentiful training data. 

The results under this generic parameterization are less uniformly favorable to multifidelity transport than in the previous example. The hierarchical method is strongly hierarchy-dependent: under cost model A, where the hierarchy places the misaligned source adjacent to $\pi_0$, it performs worse than both $\SSF$ and $\SNH$, while under cost model D, where the aligned source is adjacent to $\pi_0$ and more low-fidelity information is available overall, it can substantially reduce error at small and intermediate values of $N_0$. This contrast reinforces the hierarchy dependence observed in Section~\ref{sec:banana_example}, but the effect is more pronounced here. In the higher-dimensional, non-polynomial funnel, an unfavorable lower-fidelity cumulative map can leave a final residual that is more difficult to learn than the original high-fidelity map.

The comparison also reinforces the previous discussion concerning the distinction between \emph{explicit} and \emph{implicit} use of lower-fidelity information. When the hierarchy is favorable, H-FR explicitly applies a lower-fidelity transport map before training the final high-fidelity stage, so the final map only needs to learn a residual correction. This can make the hierarchical method robust to a generic final-stage parameterization. In contrast, $\SNH$ uses low-fidelity information \emph{implicitly} through monotone corrections to its parameterization. Under a generic TO-3 parameterization, the high-fidelity shift remains sufficiently expressive that the fitted map is not strongly forced to rely on the available low-fidelity structure that offers a potential reduction in sample complexity. 

Panels (c,d) of Figure~\ref{fig:neal_funnel_overall_results} test the high-fidelity complexity-reduction mechanism more directly. The single-fidelity baseline retains a TO-3 parameterization, while the multifidelity methods use HF-TO-1. Thus, as in Section~\ref{sec:banana_example}, the part of the map learned directly from scarce high-fidelity samples is restricted, while the low-fidelity maps retain the higher-order structure. Under this comparison, $\SNH$ gives the clearest and most stable improvement over the single-fidelity baseline. This indicates that the corrected low-fidelity components and trainable scales supply nonlinear structure that would otherwise need to be learned from high-fidelity samples alone.

The hierarchical method remains competitive only under a favorable ordering. Under cost model D, where $\pi_2$ is adjacent to $\pi_0$, $\SFR$ improves over the single-fidelity baseline in several regimes. Under cost model A, the same method degrades because the final high-fidelity stage must correct the effect of the misaligned source. Panel (e) summarizes these trends: the best aggregate performance is obtained by $\SNH$ with HF-TO-1, while $\SFR$ is effective only when the hierarchy is well aligned with the target geometry.
\begin{figure}[!htb]
\centering
\includegraphics[clip, trim={0 0.1cm 0 0}, width=0.95\linewidth]{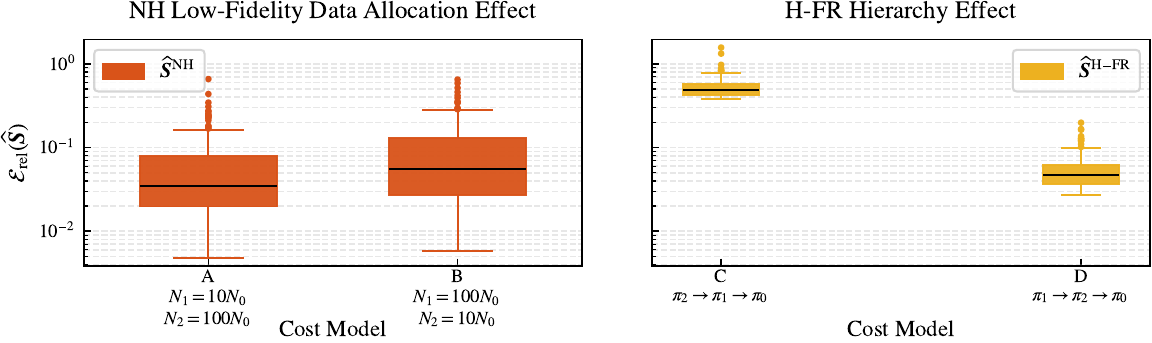}
\caption{Sensitivity to low-fidelity sample allocation and hierarchical ordering for HF-TO-1 multifidelity maps in Section~\ref{sec:nealfunnel_example} aggregated over $N_0\in\{50,100,150,250, 350, 800\}$. Left: $\SNH$ under cost models A and B, comparing under different budgets for $\pi_1$ and $\pi_2$. Right: $\SFR$ under cost models C and D, which use equal low-fidelity sample sizes but reverse the hierarchy order.}
\label{fig:nealfunnel_hierarchy_and_allocation_sensitivity}
\end{figure}

Figure~\ref{fig:nealfunnel_hierarchy_and_allocation_sensitivity} isolates the allocation and hierarchy effects using the cost model interpretation from Table~\ref{tab:numerical_cost_models}. The left panel compares $\SNH$ under cost models A and B, thereby isolating the effect of exchanging the low-fidelity sample allocation between the misaligned source $\pi_1$ and the aligned source $\pi_2$. The aggregate differences are modest relative to the variation across training replicates, indicating that the non-hierarchical construction is not highly sensitive to this allocation change.

The right panel shows a much stronger hierarchy effect for $\SFR$. Cost models C and D use the same low-fidelity sample sizes but reverse the hierarchy. The hierarchy in cost model D performs substantially better than the reversed hierarchy in cost model C, again showing that the fixed-reference hierarchy is effective only when the lower-fidelity cumulative map leaves a simple high-fidelity residual.

Overall, the softened Neal-funnel example reinforces the main conclusion of Section~\ref{sec:banana_example} in a more demanding setting. Multifidelity information is most useful when it reduces the complexity of the map component that must be learned from scarce high-fidelity samples. The non-hierarchical method benefits most from this mechanism when the high-fidelity shift is restricted, giving the most stable improvement in the HF-TO-1 comparison. The fixed-reference hierarchical method can also be effective, and can be more robust to a generic final-stage parameterization under a favorable ordering, but the higher dimension and non-polynomial funnel geometry make poor hierarchy choices substantially more detrimental to the overall performance.

\subsection{Multifidelity simulation-based inference}
\label{sec:LV}
For our final example, we consider a simulation-based inference (SBI) problem involving a stochastic Lotka--Volterra (LV) predator-prey model, which is often used as a benchmark in SBI and other Bayesian inference tasks~\cite{martinBayesianModelingComputation2021,wang2025efficient,wilkinson2018stochastic,lueckmanBenchmarkingSBI2021}. This example differs from the preceding tests in two respects. First, the high-fidelity target density is not available in closed form, and the corresponding high-fidelity KR map is unknown. Second, the object of interest is a conditional distribution over model parameters given summary statistics of an observed trajectory, rather than an unconditional target distribution. The experiment therefore tests the impact that multifidelity triangular transport can make in improving conditional inference, when exact stochastic simulations are scarce. %

We introduce here a multifidelity extension of the LV conditional inference problem from \cite{papamakarios2016fast}. The model has four positive reaction rates, denoted by $\bm{x}\in\mathbb{R}_{+}^{4}$, and the simulator output is reduced to a summary statistic $\bm{y}\in\mathbb{R}^9$. We write $u_i=\log x_i$ and use the physical prior $\bm u\sim p=\bigotimes_{i=1}^{4}\mathcal{U}(-5,2)$. For transport training and conditional sampling, we apply the fixed componentwise CDF transform from the standard normal CDF.
Each sample is represented by the joint input-output variable
$\bm{w}=(\bm{y},\bm{\xi})\in\mathbb{R}^{13}$, and we define three fidelity-indexed joint distributions $\pi_\ell(\bm w)$, for $\ell=0,1,2$.

The high-fidelity distribution $\pi_0$ is generated using the Gillespie stochastic simulation algorithm (SSA)~\cite{gillespie1977exact}. The first lower-fidelity distribution $\pi_1$ uses a fixed-step tau-leaping approximation~\cite{gillespie2001approximate}, and the second lower-fidelity distribution $\pi_2$ uses a deterministic mean-field ODE approximation~\cite{kurtz1970solutions}. We note that, unlike the often-used ODE example, the SSA solver produces a stochastic output for a fixed input; as such, this is truly a likelihood-free inference problem and thus we cannot create reference samples without resorting to, e.g., approximate Bayesian computation (ABC)~\cite{martinBayesianModelingComputation2021}. Figure~\ref{fig:LV_fidelities} illustrates trajectories from the three simulators for two different reaction-rate vectors.
\begin{figure}[!htb]
    \centering
    \includegraphics[clip, trim={0 0.45cm 0 0}, width=0.9\linewidth]{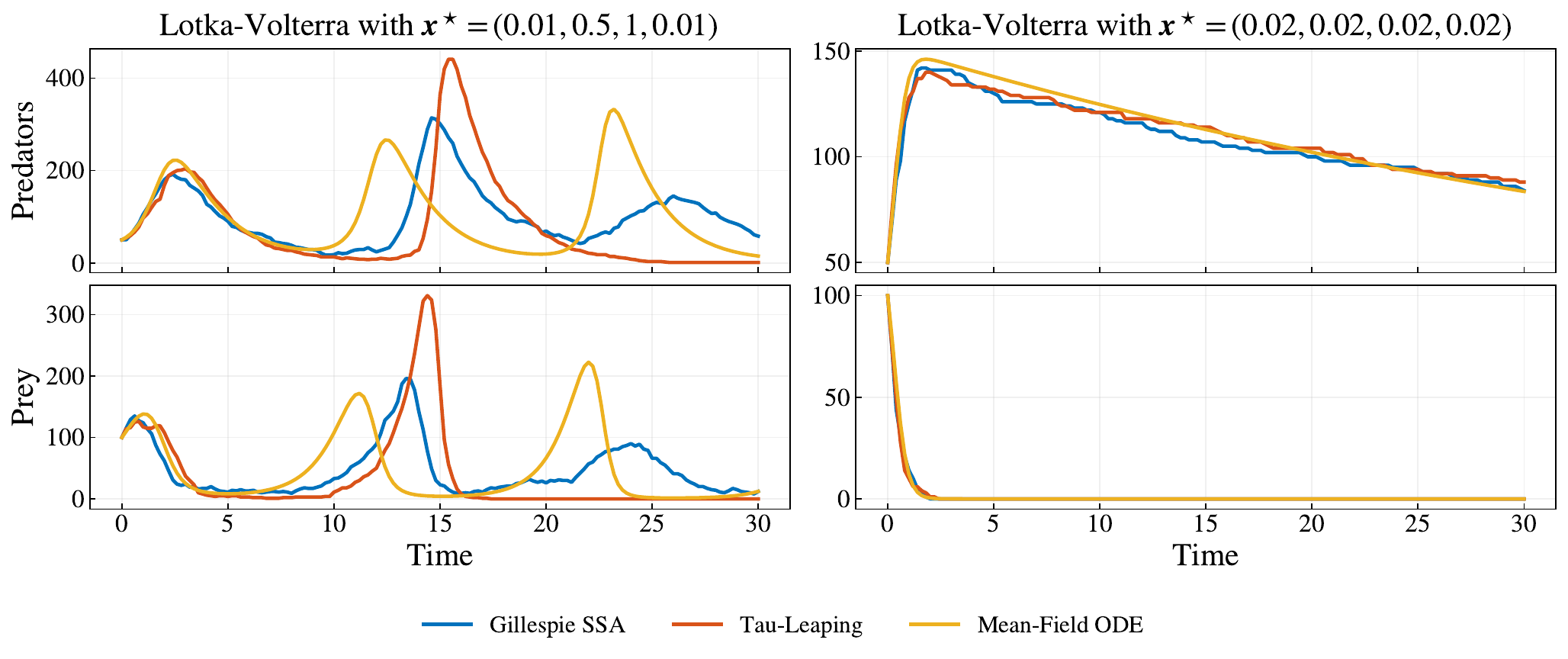}
    \caption{Example predator-prey trajectories in Section~\ref{sec:LV}, generated by Gillespie SSA, tau-leaping, and a deterministic mean-field ODE approximation for two different reaction-rate vectors.}
    \label{fig:LV_fidelities}
\end{figure}
The ODE model is the cheapest source and removes intrinsic stochasticity, while tau-leaping retains stochastic reaction counts at reduced computational cost relative to SSA. Thus, unlike the analytical examples in Sections~\ref{sec:banana_example} and~\ref{sec:nealfunnel_example}, the lower-fidelity sources are not prescribed geometric perturbations of the high-fidelity density. They arise from different approximations of the stochastic reaction process, and their usefulness for high-fidelity conditional inference must be learned from samples. To keep each joint distribution nondegenerate in the summary variables, we use a Gaussian noise model for the summaries at every fidelity. Details on the reaction network, transport-coordinate transformation, simulator implementations, summary statistics, noise model, SMC--ABC reference posterior, and metric estimators are collected in Appendix~\ref{app:lv-details}.

We aim to approximate the high-fidelity conditional distribution $\pi_0(\bm u\mid \bm y^\star)$, where $\bm y^\star$ is an observed summary generated from an SSA trajectory. Triangular transport provides an amortized approach to this task using samples from the transformed joint distribution $\pi_0(\bm w)$. Specifically, we approximate the high-fidelity KR map $\widehat{\bm S}\approx \SZero:\mathbb{R}^{13}\rightarrow\mathbb{R}^{13}$ satisfying $(\SZero)_\sharp\pi_0=\eta$, where $\eta=\mathcal{N}(0,I_{13})$. Under the ordering $\bm w=(\bm y,\bm \xi)$, the learned map decomposes as
\begin{equation}
\label{eqn:LV_conditional_decomposition}
    \widehat{\bm{S}}(\bm{y},\bm{\xi})=
    \begin{bmatrix}
        \widehat{\bm{S}}_{\bm{y}}(\bm{y})\\
        \widehat{\bm{S}}_{\bm{\xi}}(\bm{y},\bm{\xi})
    \end{bmatrix},\quad \widehat{\bm{S}}_{\bm{u}}(\bm{y},\bm{u}) \coloneq \widehat{\bm{S}}_{\bm{\xi}}(\bm{y},\Psi(\bm{u})).
\end{equation}
For a fixed observed $\bm y^\star$, conditional inference only requires the second block $\widehat{\bm S}_{\bm\xi}$. We draw $\bm z \sim \eta_{\bm\xi}= \mathcal{N}(0,I_4)$, solve the system $\widehat{\bm S}_{\bm \xi}(\bm y^\star,\bm \xi)=\bm z$ for $\bm \xi$, then transform back to log-rate coordinates $\bm u$. %
The resulting pullback on $\bm u$ is
\begin{equation}
\label{eqn:LV_conditional_density}
    \widehat{\pi}_{\bullet}(\bm u\mid \bm y^\star)
    =
        \left(\widehat{\bm S}^{\bullet}_{\bm u}(\bm y^\star,\cdot)\right)^{\sharp}
        \eta_{\bm \xi}
\end{equation}
with $\widehat{\bm S}^{\bullet}_{\bm{u}}$ as the bottom block of $\SNH$, $\SFR$, or $\SSF$ composed with $\Psi$ as in~\eqref{eqn:LV_conditional_decomposition}.

We vary the high-fidelity sample size over $N_0\in\{100,500,1000,2000\}$ and use the two fixed low-fidelity cost models in Table~\ref{tab:LV_cost_models}. In contrast to the previous examples, the low-fidelity sample sizes are not multiples of $N_0$; they are kept fixed as $N_0$ increases. This reflects an SBI setting in which a reusable pool of lower-fidelity simulations is available and the main bottleneck is the number of true SSA simulations.
\begin{table}[!htb]
\centering
\caption{Low-fidelity sample sizes and hierarchical orderings used in Section~\ref{sec:LV}.}
\label{tab:LV_cost_models}

\begin{tabular}{@{}c c c c@{}}
\toprule
Cost model & H-MF hierarchy & $N_1$ & $N_2$ \\
\midrule
$\mathrm{C}_{\mathrm{flf}}$ & $\pi_2 \to \pi_1 \to \pi_0$ & $10000$ & $10000$ \\
$\mathrm{D}_{\mathrm{flf}}$ & $\pi_1 \to \pi_2 \to \pi_0$ & $10000$ & $10000$ \\
\bottomrule
\end{tabular}
\end{table}
Cost model $\mathrm{C}_{\mathrm{flf}}$ thus uses ODE $\to$ tau leaping $\to$ SSA, while $\mathrm{D}_{\mathrm{flf}}$ reverses the two lower-fidelities. %
The non-hierarchical method is unaffected due to identical sample sizes $N_1 = N_2$.%

As a baseline, we choose total order 3 parameterizations, which we found produces satisfactory results given plentiful high-fidelity data. Thus, we compare a single-fidelity TO-3 map and with multifidelity maps using both TO-3 and HF-TO-1 parameterizations. Consistent with previous examples, we select the regularization parameter by the cross-validation procedure described in Appendix~\ref{app:regularization_cv}.
\begin{figure}[!htb]
    \centering
    \includegraphics[width=\linewidth]{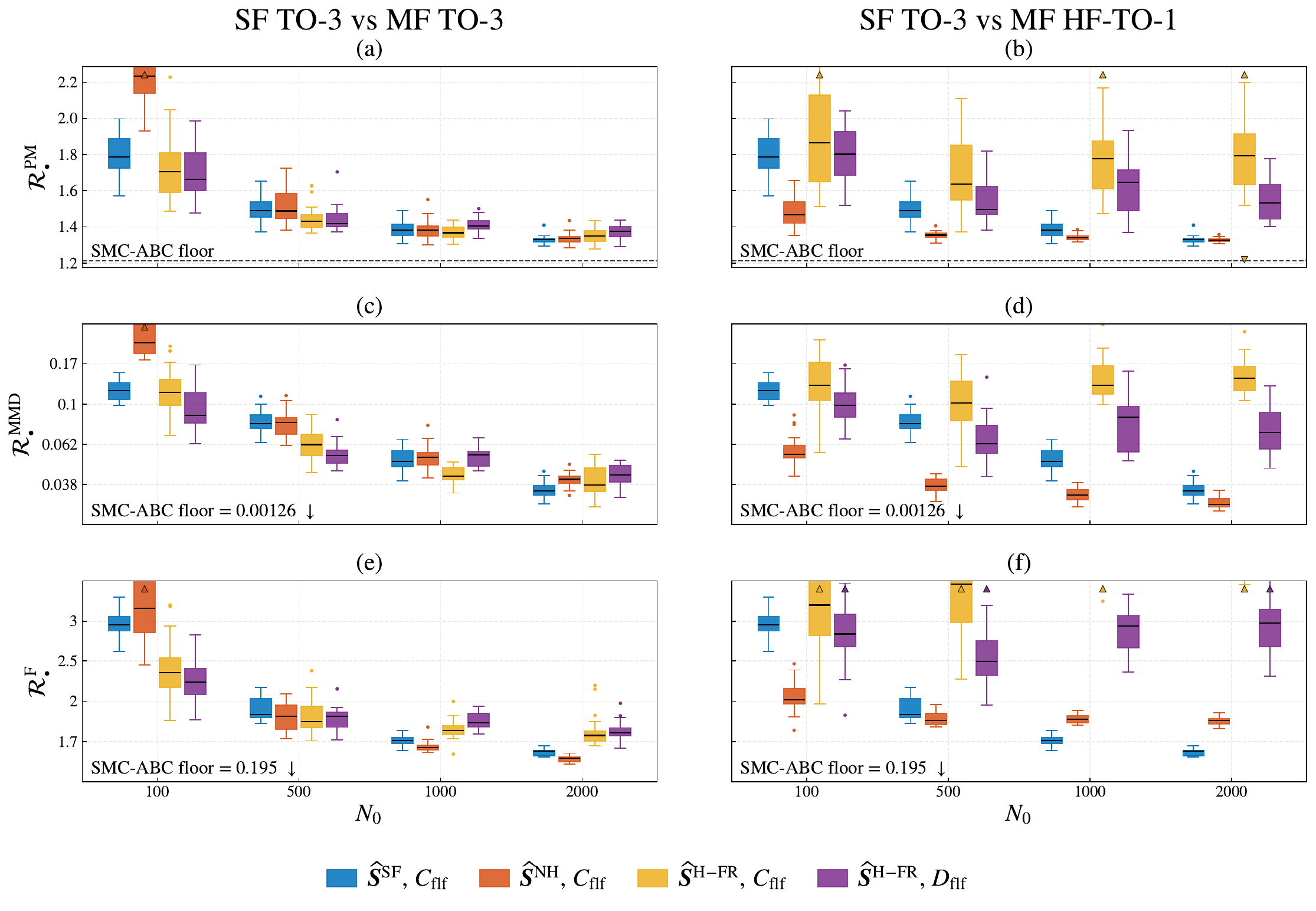}
    \caption{Amortized conditional risk metrics \eqref{eqn:lv_prior_predictive_risk} for Section~\ref{sec:LV}. From top to bottom the rows show posterior mean, MMD, and F\"orstner risks. The left column compares the maps under a common TO-3 parameterization. The right column compares the single-fidelity TO-3 baseline to the multifidelity HF-TO-1 parameterizations. Each boxplot summarizes 20 replications on 20 realizations of training data. The floor in each panel is the error for each risk between two independent SMC-ABC datasets of the same size used to estimate the respective risk of each method.}
    \label{fig:lv_quantitative}
\end{figure}

Since the high-fidelity KR map is unknown in this example, we evaluate amortized
conditional inference accuracy using prior-predictive risk metrics:

\begin{align}
\hspace{7em}\mathcal{R}_{\bullet}^{\mathfrak{m}}
    \coloneq
    \mathbb{E}&_{\bm u\sim p,\; \bm y\sim \pi_0(\cdot\mid \bm u)}
    \big[
        \Delta_{\bullet}^{\mathfrak{m}}(\bm u,\bm y)
    \big],
    \qquad
    \mathfrak{m}\in \{\mathrm{PM},\mathrm{MMD},\mathrm{F}\},
    \label{eqn:lv_prior_predictive_risk}\\[0.3em]
\Delta_{\bullet}^{\mathrm{PM}}(\bm u,\bm y)
&:=
\left\|
    \bm u
    -
    \mathbb{E}_{\bm z\sim \eta_{\bm \xi}}
    \left[
    \left(\widehat{\bm{S}}^\bullet_{\bm{u}}(\bm{y},\cdot)\right)^{-1}(\bm{z})
    \right]
\right\|_2^2,
\label{eqn:lv_pm_loss}
\\[0.5em]
\Delta_{\bullet}^{\mathrm{MMD}}(\bm u,\bm y)
&:=
\mathrm{MMD}_{\mathrm{ms}}^2
\left(
    \widehat{\pi}_{\bullet}(\cdot\mid \bm y),
    \widehat{\pi}_{\mathrm{ABC}}(\cdot\mid \bm y)
\right),
\label{eqn:lv_mmd_loss}
\\[0.5em]
\Delta_{\bullet}^{\mathrm{F}}(\bm u,\bm y)
&:=
d_{\mathrm{F}}
\left(
    \widehat{\pi}_{\bullet}(\cdot\mid \bm y),
    \widehat{\pi}_{\mathrm{ABC}}(\cdot\mid \bm y)
\right),
\label{eqn:lv_forstner_loss}
\end{align}
where only the posterior mean integrand $\Delta^{\mathrm{PM}}_\bullet$ depends on the drawn value of log-rates $\bm{u}$. %
The metric $\mathcal{R}_{\bullet}^{\mathrm{PM}}$ measures squared error between the transport-induced posterior mean and the true generating parameter. The metric $\mathcal{R}_{\bullet}^{\mathrm{MMD}}$ measures distributional discrepancy relative to the reference posterior $\widehat{\pi}_{\mathrm{ABC}}(\cdot\mid \bm y)$ using a multiscale squared maximum mean discrepancy $\mathrm{MMD}_{\mathrm{ms}}$. In contrast,
$\mathcal{R}_{\bullet}^{\mathrm{F}}$ compares the posterior covariance structure of $\widehat{\pi}_{\bullet}(\cdot\mid \bm y)$ and $\widehat{\pi}_{\mathrm{ABC}}(\cdot\mid \bm y)$ using the F\"orstner distance \cite{forstner2003metric}. Both metrics are estimated from samples generated by the transport map and reference samples generated by the PyMC SMC-ABC sampler~\cite{pymc2023}, which implements the methodology of~\cite{toni2008approximate}. Additional details on the numerical approximation of these metrics are given in Appendix~\ref{app:lv-details}.

Figure~\ref{fig:lv_quantitative} reports the three amortized conditional risk metrics defined in \eqref{eqn:lv_prior_predictive_risk}. The main trend is consistent with the preceding examples. Under a generic TO-3 parameterization, the non-hierarchical method does not reliably improve over the single-fidelity TO-3 baseline. In contrast, the NH HF-TO-1 parameterization gives the clearest and most consistent improvement when high-fidelity training data are scarce. This contrast suggests that low-fidelity maps contain nonlinear structure useful for approximating the high-fidelity map, where this structure is most readily exploited when the expressivity of the high-fidelity shift function is limited.

The improvement over the single-fidelity baseline is strongest for the MMD, which captures features of the entire distribution, and posterior mean risk metrics. The F\"orstner risk, which only measures the error in conditional covariance, shows a more nuanced behavior. The non-hierarchical approach of NH HF-TO-1 improves over the single-fidelity baseline at the smallest value of $N_0$, but then stagnates above the
single-fidelity error as $N_0$ increases. This suggests that the low-order high-fidelity shift restriction, while effective in allocating posterior mass to the correct regions and reducing MMD risk, can limit the covariance structure recovered by the transport-induced posterior. This is consistent with the MMD results: the MMD and F\"orstner metrics emphasize
different aspects of the posterior approximation. %
This expressivity-based diagnosis is further supported by the NH TO-3 results. Under this parameterization which uses a total order three high-fidelity shift function, we do not observe a stagnating F\"orstner risk relative to the single-fidelity baseline for large $N_0$.
\begin{figure}[!htb]
    \centering
    \includegraphics[width=\linewidth]{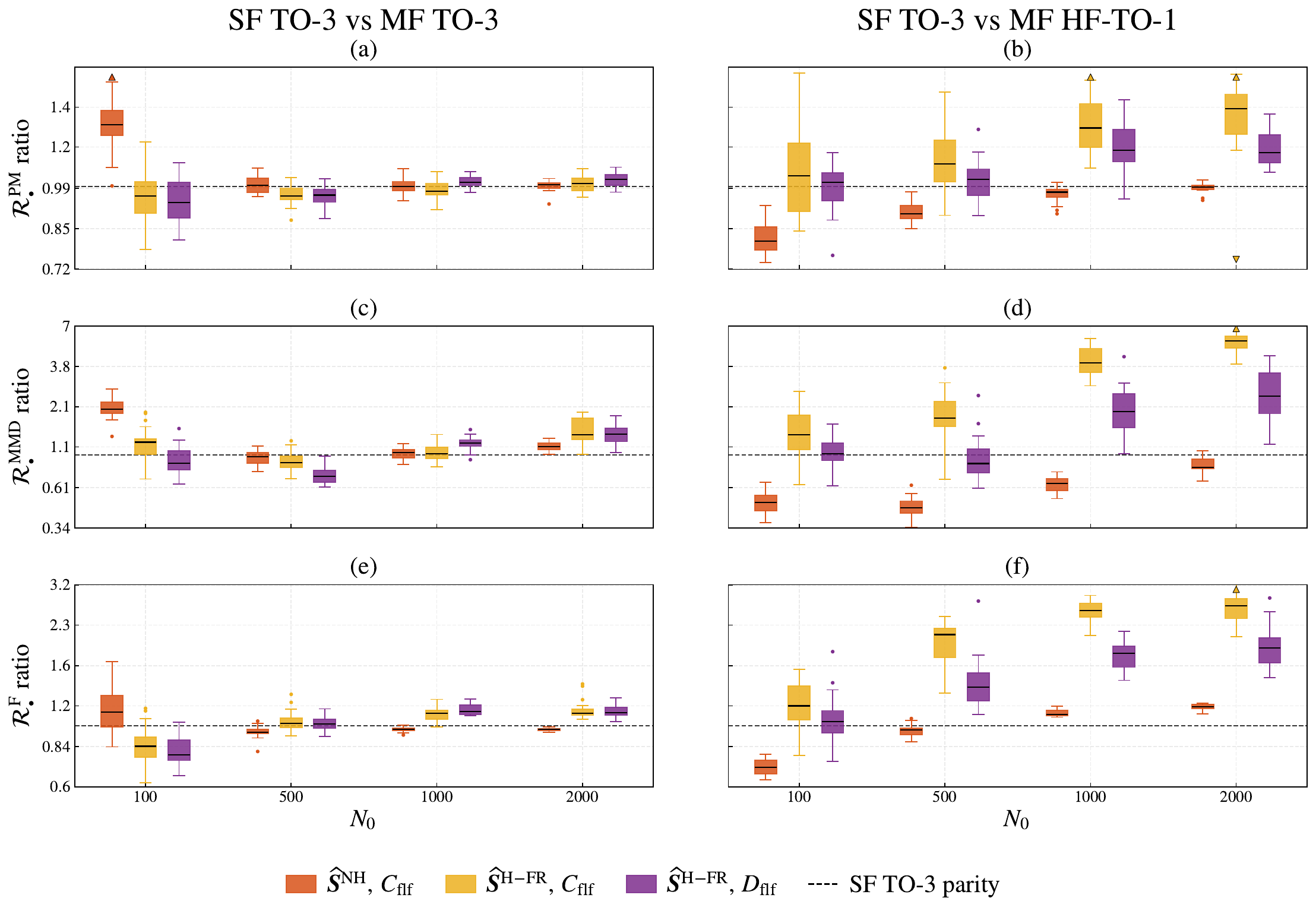}
    \caption{Amortized risk ratios for multifidelity methods relative to the single-fidelity TO-3 baseline over the 20 training replications. From top to bottom the rows present results for posterior mean, MMD, and F\"orstner risks. The left column compares single-fidelity, non-hierarchical, and hierarchical fixed reference maps under a common TO-3 parameterization. The right column compares the single-fidelity TO-3 baseline to NH and H-FR maps using a HF-TO-1 parameterization.}
    \label{fig:lv_quantitative_ratios}
\end{figure}

The H-FR maps do not exhibit the same robust improvement over single-fidelity transport. Unlike the favorable hierarchy regimes in Sections~\ref{sec:banana_example} and~\ref{sec:nealfunnel_example}, neither LV hierarchy gives a convincing advantage across the
three risk metrics. Moreover, H-FR generally performs better with a TO-3 final stage than under the HF-TO-1 parameterization, which uses a TO-1 final stage. This indicates that the residual correction from the lower-fidelity composition to the SSA target retains non-negligible higher-order structure. There is still a hierarchy effect: $D_{\mathrm{flf}}$ is generally better than
$C_{\mathrm{flf}}$, especially for HF-TO-1. Cost model $C_{\mathrm{flf}}$, however, corresponds to the natural simulator ordering ODE $\to$ tau-leaping $\to$ SSA, suggesting that simulator fidelity may not align with the hierarchy that produces the simplest transport problem.
\begin{figure}[!htb]
    \centering
    \includegraphics[clip, trim={0 0.3cm 0 0}, width=0.95\linewidth]{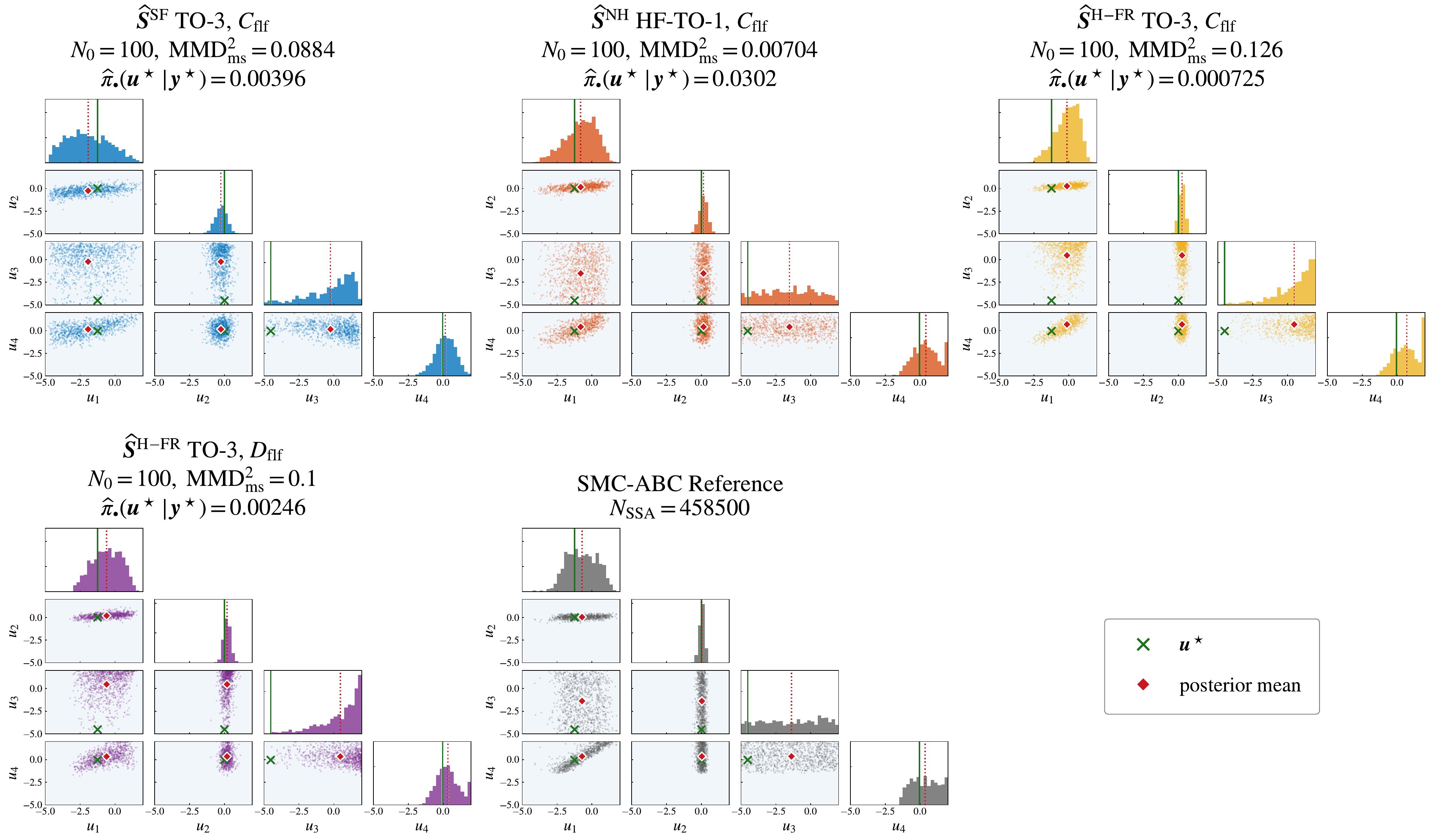}
    \caption{Corner plot showing samples from $\widehat{\pi}_{\bullet}(\bm u \mid \bm y^\star)$. Single-fidelity samples are generated by the median performing map $\SSF$ with respect to $\mathcal{R}^{\mathrm{MMD}}_{\bullet}$ at $N_0=100$. Multifidelity comparison maps use identical high-fidelity training samples to the single-fidelity baseline. The observation $\bm y^\star$ is a vector of summary statistics from a single SSA trajectory for parameter $\bm u^\star = \log \bm x^\star$ with $\bm x^\star=(0.29, 1.0, 0.01, 0.95)$. Titles indicate the map type, parameterization, and cost model. We also show reference SMC-ABC samples using $N_{\mathrm{SSA}}=458500$ high-fidelity SSA simulations.}
    \label{fig:balanced_corners_N0_100}
\end{figure}

Figure~\ref{fig:lv_quantitative_ratios} provides a complementary view on the same results. The boxplots summarize error ratios between the multifidelity methods and the single-fidelity TO-3 baseline paired by training replicate. Importantly, for a fixed training trial, the single-fidelity and multifidelity maps use identical high-fidelity training samples. These results therefore address the question: for a single set of high- and low-fidelity training data, how often and by how much do we expect the multifidelity methods to improve over the single-fidelity baseline? This paired view is even more favorable to NH HF-TO-1: for posterior mean and MMD risk, it consistently improves on the single-fidelity baseline in the sparse-data trials, with the advantage shrinking as $N_0$ increases. 

Finally, Figure~\ref{fig:balanced_corners_N0_100} gives a qualitative view of one conditional posterior at $N_0=100$. It shows samples generated by the median-performing single-fidelity map with respect to MMD risk, and it compares these with samples generated by multifidelity maps trained on the identical high-fidelity training set. This single target should not be overinterpreted in an amortized experiment, but it is consistent with the aggregate metrics: the single-fidelity posterior is comparatively diffuse, NH HF-TO-1 gives the most credible agreement with the SMC-ABC reference among the displayed maps, and the H-FR TO-3 maps show heterogeneous behavior across marginals.

\section{Conclusion}\label{sec:conclusion}

While triangular transport maps are broadly useful in Bayesian computation and uncertainty quantification, accurate map construction can still require more high-fidelity samples than is practical. We address this limitation by proposing and exploring two multifidelity triangular transport constructions that combine scarce high-fidelity data with more abundant low-fidelity data. Our \textit{hierarchical} approach composes maps along a user-specified fidelity ordering, with changing-reference and fixed-reference variants that differ in composition order and optimization properties. Our \textit{non-hierarchical} approach avoids a prescribed ordering by representing the high-fidelity transport through low-complexity scalings and shifts of pretrained lower-fidelity maps.

Several numerical experiments in Section~\ref{sec:numerics} show that the value of multifidelity information is not automatic; it depends on whether lower-fidelity structure reduces the complexity of what must be learned from scarce high-fidelity samples. This principle appears consistently across the polynomial verification example in Section~\ref{sec:banana_example}, the higher-dimensional Neal's funnel example in Section~\ref{sec:nealfunnel_example}, and the Lotka--Volterra simulation-based inference example in Section~\ref{sec:LV}.

The non-hierarchical method works best when paired with a parameterization that encourages reliance on the pretrained low-fidelity maps. With a low-order high-fidelity shift, for instance, the low-fidelity data provide the nonlinear transport structure, and the high-fidelity samples are used mainly to learn low-complexity corrections and scale coefficients. The scale coefficients also provide a mechanism for suppressing less useful information sources, making the method insensitive to the chosen fidelity hierarchy. Hierarchical methods, by contrast, depend directly on the choice of hierarchy. When the ordering is favorable, the staged transformations can reduce the complexity of the final stage map and make the method robust to a na\"ive parameterization. When the ordering is poor, however, errors propagate through the composition and make later stages harder to learn. In our experiments, the fixed-reference variant is generally faster and more stable to train than the changing-reference variant, which casts aside the standard Gaussian reference after the first stage.

There are several compelling directions for future work. Cost-aware adaptive methods could select both sample allocations and map expressivity across fidelities in an automated manner. Such adaptive parameterizations may also improve scalability by exploiting sparse conditional dependence, as has been demonstrated for triangular maps in the single-fidelity setting~\cite{lunde2026adaptive}; however, these ideas have yet to be developed in the context of multifidelity data. 
Alternative parameterizations, such as neural networks, radial basis functions, splines, and other adaptive bases may further improve scalability or localization~\cite{bryutkin2025neural,lunde2026adaptive}. We also wish to extend the framework to settings where multifidelity information is available through unnormalized densities rather than samples alone~\cite{sharp2026sampling,muchandimath2025accelerating}.
Another direction of interest is the extension of multifidelity transport to federated settings, where different information sources remain on separate clients and only gradient or parameter updates can be communicated. More broadly, the hierarchical approaches developed in this work are readily transferrable to other transport representations, such as normalizing flows and flow matching. By contrast, the non-hierarchical approach is tailored to the present triangular transport setting, and the development of non-hierarchical approaches for other transport methods remains a compelling open research direction.

\section*{Acknowledgments}
This material is based upon work supported by the U.S. Department of Energy, Office of Science, Office of Advanced Scientific Computing Research. YM also acknowledges support from the DOE Office of Science, Office of Advanced Scientific Computing Research (ASCR), via the FASTMath6 SciDAC Institute (contract number DE-AC52-07NA27344, under a subcontract from Lawrence Livermore National Laboratory), via the M2dt MMICC center (award number DE-SC0023187), and via the ROADMAP project (award number DE-SC0026245).

Sandia National Laboratories is a multi-mission laboratory managed and operated by National Technology \& Engineering Solutions of Sandia, LLC (NTESS), a wholly owned subsidiary of Honeywell International Inc., for the US Department of Energy’s National Nuclear Security Administration (DOE/NNSA) under contract DE-NA0003525. This written work is authored by an employee of NTESS. The employee, not NTESS, owns the right, title and interest in and to the written work and is responsible for its contents. Any subjective views or opinions that might be expressed in the written work do not necessarily represent the views of the U.S. Government. The publisher acknowledges that the U.S. Government retains a non-exclusive, paid-up, irrevocable, world-wide license to publish or reproduce the published form of this written work or allow others to do so, for U.S. Government purposes. The DOE will provide public access to results of federally sponsored research in accordance with the DOE Public Access Plan.

\bibliographystyle{plain}
\bibliography{references}

\clearpage
\appendix
\setcounter{section}{0}
\setcounter{equation}{0}
\setcounter{figure}{0}
\setcounter{table}{0}
\renewcommand{\thesection}{\Alph{section}}
\renewcommand{\thesubsection}{\thesection.\arabic{subsection}}
\renewcommand{\theequation}{\thesection.\arabic{equation}}
\renewcommand{\thefigure}{\thesection.\arabic{figure}}
\renewcommand{\thetable}{\thesection.\arabic{table}}
\renewcommand{\theHsection}{appendix.section.\Alph{section}}
\renewcommand{\theHequation}{appendix.equation.\Alph{section}.\arabic{equation}}
\renewcommand{\theHfigure}{appendix.figure.\Alph{section}.\arabic{figure}}
\renewcommand{\theHtable}{appendix.table.\Alph{section}.\arabic{table}}

\section*{Appendices}
\addcontentsline{toc}{section}{Appendices}

\section{Additional Details on Triangular Transport}\label{app:triangular_transport_details}

This appendix collects triangular transport details that are useful for implementation but not essential to the main development in Section~\ref{sec:triangular_transport}.

\subsection{Map inversion}\label{app:map_inversion}

As noted in Section~\ref{sec:triangular_transport}, triangularity makes inversion recursive. Let $\bm{S}$ be a monotone triangular map of the form \eqref{eqn:tri_transport_structure}. To evaluate $\bm{x}^*=\bm{S}^{-1}(\bm{z}^*)$ for a given $\bm{z}^*\in\mathbb{R}^d$, one solves
\begin{equation}\label{eqn:map_inversion_app}
    \bm{S}(\bm{x}^*)
    =
    \begin{bmatrix}
        S_1(x_1^*)\\
        S_2(x_1^*,x_2^*)\\
        \vdots\\
        S_d(x_1^*,\ldots,x_d^*)
    \end{bmatrix}
    =
    \begin{bmatrix}
        z_1^*\\
        z_2^*\\
        \vdots\\
        z_d^*
    \end{bmatrix}.
\end{equation}
We find $x_k^*$ recursively. Suppose $x_1^*,\ldots,x_{k-1}^*$ have been computed; then, define the one-dimensional function
\begin{equation}\label{eqn:scalar_inversion_function_app}
    \varphi_k(t;x_1^*,\ldots,x_{k-1}^*)
    :=
    S_k(x_1^*,\ldots,x_{k-1}^*,t).
\end{equation}
The $k$-th coordinate can thus be obtained from
\begin{equation}\label{eqn:inversion_recursion_app}
    x_k^*
    =
    \varphi_k(\,\cdot\,;x_1^*,\ldots,x_{k-1}^*)^{-1}(z_k^*),
    \qquad
    k=1,\ldots,d .
\end{equation}
Evaluating $\bm{S}^{-1}$ subsequently requires solving $d$ scalar root-finding problems. If $\partial_k S_k>0$, any solution to this scalar problem is unique~\cite{Marzouk2016}.

\subsection{Additional information about map component parameterization}\label{app:map_param}

In the numerical experiments, each function $f_k$ appearing in \eqref{eqn:integrated_rectifier} is represented by a finite expansion

\begin{equation}\label{eqn:polynomial_parameterization}
    f_k(\bm{x}_{\leq k};\bm{\theta}_k)
    =
    \sum_{\bm{\alpha}\in\Lambda_k}
    c_{\bm{\alpha}}\Phi_{\bm{\alpha}}(\bm{x}_{\leq k}),
    \qquad
    \bm{\theta}_k
    =
    \{c_{\bm{\alpha}}:\bm{\alpha}\in\Lambda_k\},
\end{equation}
where $\Lambda_k$ is a finite set of $k$-dimensional multi-indices and $\Phi_{\bm{\alpha}}$ is a tensor product of one-dimensional basis functions \cite{le2010spectral,ernst2012convergence}. A \emph{total-order} index set of order $p$ is

\begin{equation}\label{eqn:total_order_index_set}
    \Lambda_k(p)
    :=
    \left\{
        \bm{\alpha}\in\mathbb{N}_0^k:
        |\bm{\alpha}|_1
        =
        \sum_{r=1}^k \alpha_r
        \leq p
    \right\}.
\end{equation}
This is the convention used later when referring to total-order map parameterizations.

Following \cite{ramgraber2025friendly}, we use a combination of probabilists' Hermite polynomials $\{\mathrm{He}_n\}_{n=0}^{\infty}$ and Hermite functions
\begin{equation}\label{eqn:Hermite_functions}
    \mathcal{H}_n(x)
    =
    \mathrm{He}_n(x)\exp\!\left(-\frac{x^2}{\sigma}\right),
\end{equation}
where $n$ denotes polynomial degree. Hermite functions multiply the classical Hermite polynomial by an exponentially decaying factor, which damps the basis far from the origin. The damping is controlled by the parameter $\sigma$, which is fixed at $30$ in this work. This edge control improves stability when evaluating or inverting maps outside the training region, although excessive damping can reduce extrapolation accuracy. In the experiments, we use Hermite functions for terms of order two and higher, while retaining standard Hermite polynomials for constant and linear terms.

The positive rectifier $g$ in \eqref{eqn:integrated_rectifier} enforces monotonicity through \eqref{eqn:integrated_rectifier_derivative}. Different rectifiers can affect parameter identifiability and optimization behavior \cite{baptista2024representation}. We use the softplus rectifier

\begin{equation}\label{eqn:rectifier}
    g(x)
    =
    \mathrm{SoftPlus}(x)
    =
    \log(1+\exp(x)),
\end{equation}
which is smooth and strictly positive, and has been used effectively in prior triangular transport constructions \cite{baptista2024representation,ramgraber2025friendly}.

\subsection{Optimization details}
\label{app:optimization_details}

The finite-dimensional optimization problems arising in map training are solved using the BFGS and limited-memory BFGS (L-BFGS) routines implemented in \texttt{SciPy} \cite{virtanen2020scipy}. These routines are used throughout the numerical experiments for single-fidelity, hierarchical multifidelity, and non-hierarchical multifidelity map training.

\section{Hierarchical Transport and Deep Maps}
\label{app:hierarchical_transport_details}

This appendix expands Section~\ref{sec:hierarchical_multifidelity} by relating the hierarchical multifidelity constructions to single-fidelity deep transport maps. Compositions of triangular maps have been used to improve transport approximations without substantially enriching each individual map parameterization \cite{el2012bayesian,Marzouk2016,parno2018transport,peherstorfer2019transport}. In particular, the authors in \cite{peherstorfer2019transport} describe a deep-map construction based on composing maps in coarse approximation spaces and note an alternative construction that reverses the composition direction by keeping the reference fixed and pulling back the target through the current map approximation \cite{peherstorfer2019transport}; see also the data-driven density-estimation perspective of \cite{tabak2013family}. Related deep compositions of transport maps have also been developed using tensor-train representations \cite{cui2022deep}.

\subsection{Single-fidelity changing-reference composition}
\label{app:single_fidelity_changing_reference}

Consider a single-fidelity problem of measure transport with target density $\pi$, reference density $\eta$, and samples $\{\bm{x}^{(j)}\}_{j=1}^{N}\sim\pi$. The goal is to approximate the KR map $\bm{S}$ satisfying $\bm{S}_{\sharp}\pi=\eta$.

In a changing-reference composition, one constructs a sequence of triangular maps
\[
    \widehat{\bm{A}}_L,\widehat{\bm{A}}_{L-1},\ldots,\widehat{\bm{A}}_1 ,
\]
where each stage uses the same target samples but a reference density induced by the previous stage. This indexing is chosen to parallel the changing-reference multifidelity construction in Section~\ref{sec:hcr_formulation}: the composition is trained from $m=L$ down to $m=1$, with the prescribed reference appearing at the first stage.

Set
\begin{equation}\label{eqn:sf_cr_initial_reference}
    q_L := \eta .
\end{equation}
Then, for $m=L,L-1,\ldots,1$, train $\widehat{\bm{A}}_m$ so that
\begin{equation}\label{eqn:sf_cr_stage_push}
    (\widehat{\bm{A}}_m)_{\sharp}\pi
    \approx
    q_m .
\end{equation}
Equivalently, $\widehat{\bm{A}}_m$ is obtained by solving
\begin{gather}\label{eqn:sf_cr_stage_objective}
    \min_{\bm{\alpha}_m}
    \frac{1}{N}
    \sum_{j=1}^{N}
    \left[
        -\log q_m\!\left(
            \widehat{\bm{A}}_m(\bm{x}^{(j)};\bm{\alpha}_m)
        \right)
        -
        \log\det\nabla \widehat{\bm{A}}_m
        (\bm{x}^{(j)};\bm{\alpha}_m)
    \right],\\[3pt]\nonumber
    \partial_k \widehat A_{m,k}>0,\quad k=1,\ldots,d .
\end{gather}
The map-induced approximation of the target after stage $m$ is
\begin{equation}\label{eqn:sf_cr_induced_density}
    \widehat{\pi}_m
    :=
    \widehat{\bm{A}}_m^{\sharp}q_m
    \approx
    \pi .
\end{equation}
For $m\geq 2$, this density becomes the reference for the next stage:
\begin{equation}\label{eqn:sf_cr_reference_update}
    q_{m-1}
    :=
    \widehat{\pi}_m
    =
    \widehat{\bm{A}}_m^{\sharp}q_m,
    \qquad
    m=L,\ldots,2 .
\end{equation}
The resulting composed map is
\begin{equation}\label{eqn:sf_cr_final_map}
    \bm{S}^{\mathrm{deep\text{-}CR}}
    :=
    \widehat{\bm{A}}_L
    \circ
    \widehat{\bm{A}}_{L-1}
    \circ
    \cdots
    \circ
    \widehat{\bm{A}}_1
    \approx
    \bm{S} .
\end{equation}
Thus, each new factor corrects the discrepancy between the current map-induced approximation and the true target. This construction corresponds to the changing-reference composition described for deep transport maps in \cite{peherstorfer2019transport}.

\subsection{Single-fidelity fixed-reference composition}
\label{app:single_fidelity_fixed_reference}

A second single-fidelity strategy keeps the reference density fixed at every stage. This corresponds to the alternative composition direction discussed in \cite{peherstorfer2019transport}, where the target is repeatedly pulled back through the current map approximation while the reference remains fixed; related fixed-reference constructions appear in data-driven transport and density-estimation methods \cite{tabak2013family,trigila2016data}.

The construction proceeds as follows. Define a cumulative map and train from $m=L$ down to $m=1$. Set
\begin{equation}\label{eqn:sf_fr_initial_composition}
    \widehat{\bm{C}}_{L+1}
    :=
    \mathrm{Id}.
\end{equation}
At stage $m=L,L-1,\ldots,1$, define the transformed density
\begin{equation}\label{eqn:sf_fr_transformed_density}
    q_m
    :=
    (\widehat{\bm{C}}_{m+1})_{\sharp}\pi
\end{equation}
and the corresponding transformed samples
\begin{equation}\label{eqn:sf_fr_transformed_samples}
    \bm{y}_m^{(j)}
    :=
    \widehat{\bm{C}}_{m+1}(\bm{x}^{(j)}),
    \qquad
    j=1,\ldots,N .
\end{equation}
At the first stage, $m=L$, we have $\widehat{\bm{C}}_{L+1}=\mathrm{Id}$, so $q_L=\pi$ and $\bm{y}_L^{(j)}=\bm{x}^{(j)}$. At later stages, $q_m$ is the distribution obtained by applying the cumulative map already learned to the target distribution.

The residual map $\widehat{\bm{B}}_m$ is trained so that
\begin{equation}\label{eqn:sf_fr_stage_push}
    (\widehat{\bm{B}}_m)_{\sharp}q_m
    \approx
    \eta .
\end{equation}
Equivalently, $\widehat{\bm{B}}_m$ is obtained by solving
\begin{gather}\label{eqn:sf_fr_stage_objective}
    \min_{\bm{\beta}_m}
    \frac{1}{N}
    \sum_{j=1}^{N}
    \left[
        -\log \eta\!\left(
            \widehat{\bm{B}}_m(\bm{y}_m^{(j)};\bm{\beta}_m)
        \right)
        -
        \log\det\nabla \widehat{\bm{B}}_m
        (\bm{y}_m^{(j)};\bm{\beta}_m)
    \right],\\[3pt]\nonumber
    \partial_k \widehat B_{m,k}>0,\quad k=1,\ldots,d .
\end{gather}
The cumulative map is updated by composition:
\begin{equation}\label{eqn:sf_fr_composition_update}
    \widehat{\bm{C}}_m
    :=
    \widehat{\bm{B}}_m
    \circ
    \widehat{\bm{C}}_{m+1},
    \qquad
    m=L,L-1,\ldots,1 .
\end{equation}
The final fixed-reference deep map is
\begin{equation}\label{eqn:sf_fr_final_map}
    \bm{S}^{\mathrm{deep\text{-}FR}}
    :=
    \widehat{\bm{C}}_1
    =
    \widehat{\bm{B}}_1
    \circ
    \widehat{\bm{B}}_{2}
    \circ
    \cdots
    \circ
    \widehat{\bm{B}}_{L}
    \approx
    \bm{S} .
\end{equation}

This construction can be viewed as progressive Gaussianization: each stage applies the current cumulative map to the target samples and then learns an additional triangular correction to move the transformed samples closer to $\eta$. Unlike the changing-reference construction, the reference density remains fixed. Therefore, if $\eta$ is chosen to be standard Gaussian, every stage retains the same log-concave, factorized-reference structure used in Section~\ref{sec:inverse_transport}.

\subsection{Relation to the multifidelity hierarchy}
\label{app:deep_maps_relation_to_mf}

The two hierarchical multifidelity formulations in Section~\ref{sec:hierarchical_multifidelity} can be understood as multifidelity extensions of these single-fidelity deep-map strategies.

In the changing-reference multifidelity method of Section~\ref{sec:hcr_formulation}, the reference density changes from stage to stage, as in \eqref{eqn:sf_cr_reference_update}. The difference is that the target samples at stage $\ell$ come from the prescribed fidelity density $\pi_\ell$, rather than from the same target density $\pi$ at every stage. The single-fidelity changing-reference composition repeatedly refines an approximation to one target distribution, whereas the multifidelity version uses the prescribed fidelity hierarchy
\[
    \pi_M \rightarrow \pi_{M-1} \rightarrow \cdots \rightarrow \pi_0
\]
to define the stages.

In the fixed-reference multifidelity method of Section~\ref{sec:hfr_formulation}, the reference density remains $\eta$ at every stage, as in \eqref{eqn:sf_fr_stage_objective}. The difference is again in the data: for $\ell<M$, the transformed samples are obtained by applying the lower-fidelity composition
\[
    \widehat{\bm{B}}_{\ell+1}
    \circ
    \widehat{\bm{B}}_{\ell+2}
    \circ
    \cdots
    \circ
    \widehat{\bm{B}}_M
\]
to samples from $\pi_\ell$, while at $\ell=M$ this composition is the identity. Each new map is therefore a residual correction to the Gaussianization achieved by the lower-fidelity composition.

\section{Fixed-Reference Separability for Factorized References}
\label{app:fixed_reference_separability}

We state the componentwise separability property used in Sections~\ref{sec:hfr_formulation}, \ref{sec:nonhierarchical_training}, and \ref{sec:conditions_of_effectiveness}. The result is not specific to a particular multifidelity construction; it follows from triangularity and from using a common factorized reference density.

\begin{proposition}[Componentwise separability for fixed-reference objectives]
\label{prop:fixed_reference_separability}
Let $\eta$ be a factorized reference density on $\mathbb{R}^d$,

\begin{equation}
    \eta(\bm{z})
    =
    \prod_{k=1}^d \eta_k(z_k).
\end{equation}
For each index $r\in\mathcal R$, let $\bm{S}_r:\mathbb{R}^d\to\mathbb{R}^d$ be a monotone triangular map with components
\[
    S_{r,k}(\bm{u}_{\leq k};\bm\Theta_k),
    \qquad k=1,\ldots,d,
\]
where $\bm\Theta_k$ denotes the parameters controlling the $k$th components of all maps appearing in the objective. Assume that the parameter sets $\bm\Theta_1,\ldots,\bm\Theta_d$ are disjoint across components, although parameters may be shared across fidelities or maps within the same component $k$.

Consider the fixed-reference empirical objective
\begin{equation}\label{eqn:general_fixed_reference_objective}
    \mathcal J(\bm\Theta)
    =
    \sum_{r\in\mathcal R}
    \frac{a_r}{N_r}
    \sum_{j=1}^{N_r}
    \left[
        -\log \eta\!\left(
            \bm{S}_r(\bm{u}_r^{(j)};\bm\Theta)
        \right)
        -
        \log\det\nabla \bm{S}_r
        (\bm{u}_r^{(j)};\bm\Theta)
    \right]
    +
    \lambda
    \sum_{k=1}^d
    \mathcal R_k(\bm\Theta_k),
\end{equation}
where $a_r>0$ are fixed weights, $\{\bm{u}_r^{(j)}\}_{j=1}^{N_r}$ are fixed training samples, and the regularizer is componentwise separable. Then
\begin{equation}\label{eqn:general_fixed_reference_decomposition}
    \mathcal J(\bm\Theta)
    =
    \sum_{k=1}^d
    \mathcal J_k(\bm\Theta_k),
\end{equation}
with
\begin{equation}\label{eqn:general_fixed_reference_component_objective}
\begin{aligned}
    \mathcal J_k(\bm\Theta_k)
    =
    \sum_{r\in\mathcal R}
    \frac{a_r}{N_r}
    \sum_{j=1}^{N_r}
    \left[
        -\log \eta_k\!\left(
            S_{r,k}(\bm{u}_{r,\leq k}^{(j)};\bm\Theta_k)
        \right)
        -
        \log \partial_k S_{r,k}
        (\bm{u}_{r,\leq k}^{(j)};\bm\Theta_k)
    \right]
    +
    \lambda\mathcal R_k(\bm\Theta_k).
\end{aligned}
\end{equation}
In particular, if $\eta=\mathcal N(0,I)$, then
\begin{equation}\label{eqn:general_fixed_reference_gaussian_component_objective}
\begin{aligned}
    \mathcal J_k(\bm\Theta_k)
    =
    \sum_{r\in\mathcal R}
    \frac{a_r}{N_r}
    \sum_{j=1}^{N_r}
    \left[
        \frac12
        S_{r,k}^2(\bm{u}_{r,\leq k}^{(j)};\bm\Theta_k)
        -
        \log \partial_k S_{r,k}
        (\bm{u}_{r,\leq k}^{(j)};\bm\Theta_k)
    \right]
    +
    \lambda\mathcal R_k(\bm\Theta_k),
\end{aligned}
\end{equation}
up to additive constants independent of $\bm\Theta$.
\end{proposition}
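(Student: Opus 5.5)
The plan is a direct termwise computation. Nothing beyond triangularity, the factorized reference, and the disjointness assumption on the parameter blocks is needed. I would fix a map index $r$ and a sample $\bm{u}_r^{(j)}$, and rewrite the bracketed summand of \eqref{eqn:general_fixed_reference_objective} as a sum over components $k$. Each component $k$ will involve only $\bm\Theta_k$. Summing over $j$ and $r$ with the fixed positive weights $a_r/N_r$ and exchanging the finite sums then gives \eqref{eqn:general_fixed_reference_decomposition} with $\mathcal J_k$ as in \eqref{eqn:general_fixed_reference_component_objective}.

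First, the reference term. Since $\eta(\bm z)=\prod_k \eta_k(z_k)$ and $[\bm S_r(\bm u)]_k=S_{r,k}(\bm u_{\le k};\bm\Theta_k)$, we have
\[
-\log\eta\bigl(\bm S_r(\bm u)\bigr)=\sum_{k=1}^d -\log\eta_k\bigl(S_{r,k}(\bm u_{\le k};\bm\Theta_k)\bigr).
\]
Second, the Jacobian term. By \eqref{eqn:tri_transport_structure}, $\nabla\bm S_r$ is lower triangular. Its determinant is therefore the product of the diagonal entries $\partial_k S_{r,k}$, which are strictly positive by monotonicity. So the absolute value can be dropped and the logarithm is well defined, giving $-\log\det\nabla\bm S_r=\sum_k -\log\partial_k S_{r,k}(\bm u_{\le k};\bm\Theta_k)$.

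Third, the regularizer is separable by hypothesis, so $\lambda\mathcal R_k(\bm\Theta_k)$ is assigned to the $k$th block. Collecting the terms, each $\mathcal J_k$ depends only on $\bm\Theta_k$. Because the blocks are disjoint, the decomposition is a genuine separation of variables rather than merely a rewriting of the sum.

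For the Gaussian specialization, I would substitute $\eta_k(z)=(2\pi)^{-1/2}e^{-z^2/2}$, so that $-\log\eta_k(z)=\tfrac12 z^2+\tfrac12\log(2\pi)$. The constant contributes a total of $\tfrac d2\log(2\pi)\sum_r a_r$, which is independent of $\bm\Theta$. This yields \eqref{eqn:general_fixed_reference_gaussian_component_objective}.

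The only step needing care is verifying that each $S_{r,k}$ really depends on no parameters outside $\bm\Theta_k$ when the proposition is applied. For the non-hierarchical objective \eqref{eqn:NH-MF_all_at_once}, $\bm\Theta_k$ collects $\bm\vartheta_k$, the scales $\rho_{\ell,k}$, and the corrections $\bm\theta^{\mathrm{corr}}_{\ell,k}$, all of which are indexed by $k$. For the fixed-reference stages \eqref{eqn:hfr_stage_objective}, the samples $\bm y_\ell^{(j)}$ are fixed once the lower stages are trained. This is exactly the assumption that the samples $\bm u_r^{(j)}$ are fixed. Beyond this bookkeeping I expect no real obstacle.
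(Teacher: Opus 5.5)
Your proposal is correct and follows essentially the same route as the paper's proof. Like the paper, you factorize $-\log\eta$ across components, use lower-triangularity and monotonicity to write $\log\det\nabla\bm S_r=\sum_k\log\partial_k S_{r,k}$, collect terms using the disjoint blocks $\bm\Theta_k$ and the separable regularizer, and obtain the Gaussian case from $-\log\eta_k(z)=z^2/2+\mathrm{const}$; your explicit constant $\tfrac d2\log(2\pi)\sum_r a_r$ and the parameter bookkeeping for the two applications are correct additions.
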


\begin{proof}
Since $\eta$ factorizes,
\begin{equation}
    -\log\eta(\bm{S}_r(\bm{u}))
    =
    \sum_{k=1}^d
    -\log\eta_k(S_{r,k}(\bm{u}_{\leq k})).
\end{equation}
Since $\bm{S}_r$ is triangular and monotone,
\begin{equation}
    \det\nabla \bm{S}_r(\bm{u})
    =
    \prod_{k=1}^d
    \partial_k S_{r,k}(\bm{u}_{\leq k}),
    \qquad
    \partial_k S_{r,k}>0,
\end{equation}
and therefore
\begin{equation}
    \log\det\nabla \bm{S}_r(\bm{u})
    =
    \sum_{k=1}^d
    \log \partial_k S_{r,k}(\bm{u}_{\leq k}).
\end{equation}
Substituting these two identities into \eqref{eqn:general_fixed_reference_objective} gives a sum over components. Because the parameters and regularizer are assumed to be componentwise separable, the objective decomposes as \eqref{eqn:general_fixed_reference_decomposition}--\eqref{eqn:general_fixed_reference_component_objective}. The Gaussian form \eqref{eqn:general_fixed_reference_gaussian_component_objective} follows from $-\log\eta_k(z)=z^2/2+\mathrm{const}$.
\end{proof}

\paragraph{Application to fixed-reference hierarchical transport.}
At stage $\ell$ of the fixed-reference hierarchical method in Section~\ref{sec:hfr_formulation}, the transformed samples $\{\bm{y}_\ell^{(j)}\}_{j=1}^{N_\ell}$ are fixed before optimizing $\widehat{\bm{B}}_\ell$. The stage objective \eqref{eqn:hfr_stage_objective} is therefore a special case of Proposition~\ref{prop:fixed_reference_separability} with a single map,
\[
    \bm{S}_r=\widehat{\bm{B}}_\ell,
    \qquad
    \bm{u}_r^{(j)}=\bm{y}_\ell^{(j)}.
\]
Thus, for a factorized reference density, each fixed-reference hierarchical stage separates across triangular components. In contrast, the changing-reference hierarchical method generally does not retain this property after the first stage, because its references $q_\ell$ are map-induced and need not factorize.

\paragraph{Application to non-hierarchical multifidelity transport.}
The NH-MF objective \eqref{eqn:NH-MF_all_at_once} is also a special case of Proposition~\ref{prop:fixed_reference_separability}. In this case, take
\[
    \mathcal R=\{0,1,\ldots,M\},
\]
with
\[
    \bm{S}_0=\widehat{\bm{S}}^{\mathrm{NH}},
    \qquad
    \bm{u}_0^{(j)}=\bm{x}_0^{(j)},
    \qquad
    a_0=1,
\]
and, for $\ell=1,\ldots,M$,
\[
    \bm{S}_\ell=\widehat{\bm{S}}^{\mathrm{corr}}_\ell,
    \qquad
    \bm{u}_\ell^{(j)}=\bm{x}_\ell^{(j)},
    \qquad
    a_\ell=w_\ell.
\]
The component parameter set $\bm\Theta_k$ contains the high-fidelity shift parameters, the scale coefficients, and the low-fidelity correction parameters associated with component $k$. Hence, when $\eta$ is factorized, the NH-MF objective decomposes as
\[
    \mathcal J(\bm\Theta)
    =
    \sum_{k=1}^d
    \mathcal J_k(\bm\Theta_k),
\]
even though the componentwise objective $\mathcal J_k$ couples high- and low-fidelity information within the same component.

\section{Regularization selection}
\label{app:regularization_cv}

We select regularization parameters by five-fold cross validation separately for each experimental scenario, where a scenario is defined by the map type, parameterization, cost model, and high-fidelity sample size \(N_0\). For each scenario, one realization of the prescribed training data is generated. Only the high-fidelity dataset
\[
    \mathcal X_0 = \{\bm{x}_0^{(j)}\}_{j=1}^{N_0}
\]
is partitioned into folds; the low-fidelity datasets \(\mathcal X_1\) and \(\mathcal X_2\), when present, are kept fixed and used in full for each fold.

For fold \(s=1,\ldots,5\), let \(\mathcal X_0^{\mathrm{train},s}\) and \(\mathcal X_0^{\mathrm{val},s}\) denote the high-fidelity training and validation subsets. For each candidate
\[
    \lambda \in L
    \coloneq
    \{10^{-1},10^{-2},10^{-3},10^{-4},10^{-5},10^{-6}\},
\]
we train the corresponding map on \(\mathcal X_0^{\mathrm{train},s}\), together with the full low-fidelity datasets for multifidelity methods. Low-fidelity pretraining stages are unregularized; the candidate \(\lambda\) is applied only to the single-fidelity map, the final high-fidelity stage of a hierarchical map, or the high-fidelity shift functions of a non-hierarchical map.

Validation is performed in reference space. The held-out samples are pushed forward by the trained map,
\[
    \widehat{\bm{z}}_{\lambda,s}^{(j)}
    =
    \widehat{\bm{S}}_{\lambda,s}(\bm{x}_0^{(j)}),
    \qquad
    \bm{x}_0^{(j)}
    \in
    \mathcal X_0^{\mathrm{val},s},
\]
and compared with independent samples from \(\eta = \mathcal N(0,I)\). The fold score is the empirical squared maximum mean discrepancy with a Gaussian kernel, averaged over bandwidths \(\{0.5,1,2,4\}\). The chosen regularization constant is the one that satisfies:
\[
    \lambda^\star
    =
    \arg\min_{\lambda \in L}
    \mathrm{CV}(\lambda), \qquad \mathrm{CV}(\lambda)
    =
    \frac{1}{5}
    \sum_{s=1}^{5}
    \mathrm{MMD}_{s}^{2}(\lambda).
\]
This value is then fixed for all reported training replications within the same scenario.

\section{Additional details for Section~\ref{sec:banana_example}}
\label{app:banana_example}
Here, we record the parameter values and exact high-fidelity KR map used in Section~\ref{sec:banana_example}. The target density family is defined in \eqref{eqn:banana_density_family}. The fidelity specific parameters are

\begin{equation}\label{eqn:banana_parameters}
\begin{array}{c|c|c}
\ell & (a_\ell,b_\ell,d_\ell,e_\ell) & h_\ell(z) \\ \hline
0 & (1,1,0,0) & z^2+0.35z^3+0.18z^4 \\
1 & (1.70,0.80,2.00,1.25) & -0.65z^2+0.10z^3-0.05z^4 \\
2 & (1.25,1.25,-0.85,-0.65) & 0.75z^2+0.20z^3+0.10z^4 .
\end{array}
\end{equation}
For the high-fidelity target, the exact KR map is
\begin{equation}\label{eqn:banana4_true_kr}
\bm{S}_0(x_1,x_2)
=
\begin{bmatrix}
x_1\\
x_2-x_1^2-0.35x_1^3-0.18x_1^4
\end{bmatrix}.
\end{equation}

\section{Additional details for Section~\ref{sec:nealfunnel_example}}
\label{app:nealfunnel_example}
This appendix records the scale and shift functions, coordinate-specific modifiers, fidelity parameters, and exact high-fidelity KR map used in Section~\ref{sec:nealfunnel_example}. Recall that samples from each fidelity are generated by \eqref{eqn:nealfunnel_forward}. The conditional scale and shift functions are

\begin{gather}\label{eqn:nealfunnel_scale_shift}
\tau_{\ell,k}(z_1)
=
s_k\alpha_\ell a_k
\left[
\epsilon_\ell+\mathrm{SoftPlus}(\gamma_\ell g_k z_1+\kappa_\ell+k_k)
\right]^{p_\ell+\Delta p_k},\\[3pt]
\qquad
\xi_{\ell,k}(z_1)=\beta_\ell b_k q_k\tanh(r_k z_1),
\end{gather}
where the coordinate-specific modifiers are
\begin{equation*}\label{eqn:nealfunnel_coord_mods}
\begin{aligned}
s_k &= 1+0.1\sin k,
&
q_k &= (-1)^k,
&
a_k &= 1+0.15\sin k,
&
g_k &= 1+0.20\cos k,
\\
k_k &= 0.15\sin(2k),
&
\Delta p_k &= 0.10\cos k,
&
b_k &= 1+0.25\sin(3k),
&
r_k &= 0.75+0.10\,\frac{k}{d-1}.
\end{aligned}
\end{equation*}
The fidelity parameters are
\begin{equation}\label{eqn:nealfunnel_params}
\begin{aligned}
\pi_0:\;(\mu,\sigma,\alpha,\epsilon,\gamma,\kappa,p,\beta)&=(0,1,1,0.15,1.15,0.25,1.15,0.25),\\
\pi_1:\;(\mu,\sigma,\alpha,\epsilon,\gamma,\kappa,p,\beta)&=(0.9,1.45,1.35,0.30,-0.75,0.20,0.85,-0.20),\\
\pi_2:\;(\mu,\sigma,\alpha,\epsilon,\gamma,\kappa,p,\beta)&=(-0.4,1.15,0.9,0.20,0.95,0.05,1.05,0.18).
\end{aligned}
\end{equation}
For $\pi_0$, the exact KR map is explicit. Let $z_1(x)=(x_1-\mu_0)/\sigma_0=x_1$. Then
\begin{align}\label{eqn:nealfunnel_true_kr}
S^{}_{0,1}(x)&=x_1,\\ \nonumber
S^{}_{0,k}(x)&=
\frac{x_k-\beta_0 b_k q_k\tanh(r_k x_1)}
{s_k\alpha_0 a_k\left[\epsilon_0+\mathrm{SoftPlus}(\gamma_0 g_k x_1+\kappa_0+k_k)\right]^{p_0+\Delta p_k}},
\;\;k=2,\ldots,5.
\end{align}

\section{Additional details for Section~\ref{sec:LV}}
\label{app:lv-details}

We collect the simulator, data-gene\-ration, transport-coordinate, SMC--ABC, and
metric-estimation details used in Section~\ref{sec:LV}. Let $\bm Z(t)=(X(t),Y(t))$ denote the predator-prey state, with $\bm Z(0)=(50,100)$. For positive reaction rates $\bm x=(x_1,x_2,x_3,x_4)$, the reaction network is given in Table~\ref{tab:LV_reaction_network}.
\begin{table}[h!]
    \centering
    \caption{Reaction network for Section~\ref{sec:LV}.}
    \label{tab:LV_reaction_network}
    \begin{tabular}{@{} l c l l@{}}
    \toprule
        $r$ & State update & Propensity & Interpretation \\
    \midrule
        1 & $(X,Y)\mapsto (X+1,Y)$ & $a_1(\bm z;\bm x)=x_1XY$ & predator birth by predation \\
        2 & $(X,Y)\mapsto (X-1,Y)$ & $a_2(\bm z;\bm x)=x_2X$ & predator death \\
        3 & $(X,Y)\mapsto (X,Y+1)$ & $a_3(\bm z;\bm x)=x_3Y$ & prey birth \\
        4 & $(X,Y)\mapsto (X,Y-1)$ & $a_4(\bm z;\bm x)=x_4XY$ & prey death by predation \\
    \bottomrule
    \end{tabular}
\end{table}

We use stoichiometric vectors $\bm\nu_1=(1,0)$, $\bm\nu_2=(-1,0)$,
$\bm\nu_3=(0,1)$, and $\bm\nu_4=(0,-1)$. The high-fidelity process is the continuous-time Markov jump process with generator

\begin{equation*}
(\mathcal L f)(\bm z)
=
\sum_{r=1}^4
a_r(\bm z;\bm x)
\left[
    f(\bm z+\bm\nu_r)-f(\bm z)
\right].
\end{equation*}
We write $u_i=\log x_i$. The physical prior used for data generation, SMC--ABC, and risk evaluation is
\begin{equation*}
    \bm u \sim p
    =
    \bigotimes_{i=1}^4 \mathcal U(a,b),
    \qquad
    a=-5,\quad b=2.
\end{equation*}

For transport training and conditional sampling, we map this bounded support to
latent Gaussian coordinates using the fixed componentwise transform $\bm \xi=\Psi(\bm u)$, where

\begin{equation*}
    \xi_i
    =
    \Phi^{-1}\!\left(\frac{u_i-a}{b-a}\right),
    \qquad
    u_i
    =
    \Psi_i^{-1}(\xi_i)
    =
    a+(b-a)\Phi(\xi_i),
    \qquad i=1,\ldots,4.
\end{equation*}
Here $\Phi$ is the standard normal CDF. This transformation is used only to define the transport coordinates; it does not change the physical prior on $\bm u$. Transport maps are trained on joint variables $\bm w=(\bm y,\bm \xi)\in\mathbb R^{13}$, while posterior samples and all reported risks are evaluated after transforming back to $\bm u$.

All simulators are evaluated on the grid $t_j=0.2j$, $j=0,\ldots,150$, so that
the final time is $T=30$. For a trajectory $\{\bm Z_j=(X_j,Y_j)\}_{j=0}^{150}$, the clean summary statistic is

\begin{equation*}
\mathcal S(\bm Z)
=
\left[
\bar X,\bar Y,
\log(\widehat{\mathrm{Var}}(X)+1),
\log(\widehat{\mathrm{Var}}(Y)+1),
\rho_X(1),\rho_X(2),
\rho_Y(1),\rho_Y(2),
\rho_{XY}
\right]
\in\mathbb R^9,
\end{equation*}
i.e., the population means, log-variances, lag-$\{1,2\}$ autocorrelations, and cross-correlation.

For each fidelity $\ell\in\{0,1,2\}$, samples are generated as follows. Draw $\bm u_\ell^{(j)}\sim p$, set $\bm x_\ell^{(j)}=\exp(\bm u_\ell^{(j)})$, simulate a trajectory at fidelity $\ell$, compute the clean summary $\widetilde{\bm y}_\ell^{(j)}=\mathcal S(\bm Z_\ell^{(j)})$, and add independent Gaussian summary noise:

\begin{equation*}
    \bm y_\ell^{(j)}
    =
    \widetilde{\bm y}_\ell^{(j)}
    +
    \bm\varepsilon_\ell^{(j)},
    \qquad
    \bm\varepsilon_\ell^{(j)}
    \sim
    \mathcal N_9(\bm 0,\Gamma).
\end{equation*}
The covariance is shared across fidelities and estimated from clean high-fidelity summaries:
\begin{equation*}
    \Gamma
    =
    \mathrm{diag}(\bm\sigma_{\mathrm{noise}}^2),
    \qquad
    \sigma_{\mathrm{noise},k}
    =
    \gamma\,
    \widehat{\mathrm{sd}}
    \left\{
        \widetilde y_{0,k}^{(j)}
    \right\}_{j=1}^{N_{\mathrm{pool}}},
    \qquad
    \gamma=0.05,
    \qquad
    N_{\mathrm{pool}}=50{,}000.
\end{equation*}
The corresponding transport-training sample is
\begin{equation*}
    \bm w_\ell^{(j)}
    =
    \left(
        \bm y_\ell^{(j)},
        \Psi(\bm u_\ell^{(j)})
    \right)
    =
    \left(
        \bm y_\ell^{(j)},
        \bm \xi_\ell^{(j)}
    \right)
    \in\mathbb R^{13}.
\end{equation*}
Training sets are independent random subsamples from fidelity-specific pools.
Although the pools were generated using common log-rate draws across fidelities, map training does not require or use paired cross-fidelity data.

The high-fidelity simulator defining $\pi_0$ is Gillespie SSA
\cite{gillespie1977exact}. At state $\bm z$, let $a_0(\bm z;\bm x)=\sum_{r=1}^4 a_r(\bm z;\bm x)$. If $a_0=0$, the chain is terminated and the current state is recorded at all remaining observation times. Otherwise, SSA draws
\begin{equation*}
    \Delta t
    =
    -\frac{\log U_1}{a_0(\bm z;\bm x)},
    \qquad
    U_1\sim\mathcal U(0,1),
\end{equation*}
selects the reaction index according to the normalized propensities, and advances events until each observation time is crossed.

The first lower-fidelity simulator defining $\pi_1$ is fixed-step tau leaping
\cite{gillespie2001approximate} with $\tau=0.01$. From state $\bm z$, over a step of length $\Delta t\leq \tau$, independent reaction counts are drawn as
\begin{equation*}
    K_r
    \sim
    \mathrm{Poisson}
    \left(
        a_r(\bm z;\bm x)\Delta t
    \right),
    \qquad
    \bm z
    \leftarrow
    \bm z+\sum_{r=1}^4 K_r\bm\nu_r.
\end{equation*}
To avoid negative populations, death counts are bounded by the current predator and prey populations before applying the update.

The second lower-fidelity simulator defining $\pi_2$ is the deterministic mean-field ODE \cite{kurtz1970solutions}
\begin{equation*}
    \frac{\mathrm dX}{\mathrm dt}
    =
    x_1XY-x_2X,
    \qquad
    \frac{\mathrm dY}{\mathrm dt}
    =
    x_3Y-x_4XY.
\end{equation*}
It is integrated by fourth-order Runge--Kutta with maximum step size $h=0.02$.

For each observed summary $\bm y_{\mathrm{obs}}$, the SMC--ABC reference posterior is computed in physical log-rate coordinates $\bm u$. The prior is the
uniform prior above, and each proposed parameter is simulated with SSA. The ABC
discrepancy uses componentwise prior-predictive normalization. Specifically, for
each conditioning problem we draw $N_{\mathrm{prior}}=1000$ independent prior
samples, simulate SSA summaries $\bm y^{(r)}$, discard numerical failures, and set

\begin{equation*}
    d_k
    =
    \max
    \left\{
        \widehat{\mathrm{sd}}
        \left(
            y_k^{(r)}
        \right),
        10^{-8}
    \right\},
    \qquad
    k=1,\ldots,9.
\end{equation*}
SMC--ABC is then run on normalized summaries with Gaussian ABC kernel
\begin{equation*}
K_\epsilon(\bm y,\bm y_{\mathrm{obs}})
=
\exp
\left\{
-\frac{1}{2\epsilon^2}
\sum_{k=1}^9
\left(
    \frac{y_k-y_{\mathrm{obs},k}}{d_k}
\right)^2
\right\},
\qquad
\epsilon=0.1.
\end{equation*}
The resulting SMC--ABC samples define $\widehat\pi_{\mathrm{ABC}}(\cdot\mid \bm y_{\mathrm{obs}})$.

All posterior comparisons are performed in physical log-rate coordinates $\bm u$. For each prior-predictive test observation, the transport approximation contributes $1000$ conditional samples obtained by solving in $\bm \xi$-coordinates and applying $\Psi^{-1}$. These are compared with a fixed $1000$-sample subset of the corresponding SMC--ABC posterior.

For MMD, we use bandwidths
\begin{equation*}
    \mathcal H=\{0.5,1,2,4\},
    \qquad
    k_h(\bm u,\bm u')
    =
    \exp
    \left(
        -\frac{\|\bm u-\bm u'\|_2^2}{2h^2}
    \right).
\end{equation*}
Given samples $A=\{\bm u_i^A\}_{i=1}^n$ and $B=\{\bm u_i^B\}_{i=1}^m$, the biased squared MMD estimator is
\begin{align*}
\widehat{\mathrm{MMD}}_{h,\mathrm{b}}^2(A,B)
&=
\frac{1}{n^2}
\sum_{i=1}^n
\sum_{j=1}^n
k_h(\bm u_i^A,\bm u_j^A)
+
\frac{1}{m^2}
\sum_{i=1}^m
\sum_{j=1}^m
k_h(\bm u_i^B,\bm u_j^B)
\\
&\hspace{2em}
-
\frac{2}{nm}
\sum_{i=1}^n
\sum_{j=1}^m
k_h(\bm u_i^A,\bm u_j^B).
\end{align*}
We report the multiscale average
\begin{equation*}
    \widehat{\mathrm{MMD}}_{\mathrm{ms}}^2(A,B)
    =
    |\mathcal H|^{-1}
    \sum_{h\in\mathcal H}
    \widehat{\mathrm{MMD}}_{h,\mathrm{b}}^2(A,B).
\end{equation*}
This biased estimator is nonnegative and is used for all MMD risks and floors
reported in Section~\ref{sec:LV}.

For the F\"orstner metric, let $\widehat C_A$ and $\widehat C_B$ denote the sample covariance matrices of two conditional sample sets in $\bm u$-coordinates. We compute
\begin{equation*}
    d_{\mathrm F}(A,B)
    =
    \left(
        \sum_{i=1}^4
        \log^2 \lambda_i
    \right)^{1/2},
\end{equation*}
where $\{\lambda_i\}_{i=1}^4$ are the generalized eigenvalues of $\widehat C_A v=\lambda \widehat C_B v$.

The empirical risks in Section~\ref{sec:LV} are averaged over $M_{\mathrm{test}}=250$ prior-predictive observations. For the posterior-mean risk, let $\{\bm u_{\bullet,m}^{(q)}\}_{q=1}^{1000}$ be transport-induced posterior samples
for test observation $m$, and define

\begin{equation*}
    \widehat{\bm\mu}_{\bullet,m}
    =
    \frac{1}{1000}
    \sum_{q=1}^{1000}
    \bm u_{\bullet,m}^{(q)}.
\end{equation*}
Thus the empirical posterior-mean risk is
\begin{equation*}
    \widehat{\mathcal R}_{\bullet}^{\mathrm{PM}}
    =
    \frac{1}{M_{\mathrm{test}}}
    \sum_{m=1}^{M_{\mathrm{test}}}
    \left\|
        \bm u_m^\star
        -
        \widehat{\bm\mu}_{\bullet,m}
    \right\|_2^2,
\end{equation*}
where $\bm u_m^\star$ is the true generating log-rate for test observation $m$.
The MMD and F\"orstner risks are computed analogously by averaging the corresponding sample-based discrepancies over the same test observations.

The SMC--ABC risk floors estimate the minimum value of the three risk metrics under the idealized situation in which the map-generated samples are perfect; i.e., they belong to the SMC--ABC reference distribution. Using the same $M_{\mathrm{test}}=250$ prior-predictive observations, we compute the risk metrics on $20$ pairs of independent $1000$ sample subsets from precomputed SMC-ABC pools. Averaging over this ensemble of $20$ yields the reported risk floor.

\end{document}